\definecolor{aliceblue}{rgb}{0.94, 0.97, 1.0}
\newcommand{\squishlist}{
	\begin{list}{$\bullet$}
		{
			\setlength{\itemsep}{0pt}
			\setlength{\parsep}{3pt}
			\setlength{\topsep}{3pt}
			\setlength{\partopsep}{0pt}
			\setlength{\leftmargin}{1em}
			\setlength{\labelwidth}{0.8em}
			\setlength{\labelsep}{0.3em} } }
\newcommand{\squishend}{
\end{list}  }
\newcommand{\point}[1]{\vspace{0.8ex}\par\noindent{\bf #1.}}
\newcommand{\tool}{ShadowNet\xspace}
\newcommand\numberthis{\addtocounter{equation}{1}\tag{\theequation}}
\newcommand{\F}{\mathbb{F}}
\newtheorem{theorem}{Theorem}
\newtheorem{lemma}{Lemma}
\newtheorem{conj}{Conjecture}
\newenvironment{sketch}{\begin{proof}[Proof Sketch]}{\end{proof}}
\begin{document}
%-------------------------------------------------------------------------------

%don't want date printed
\date{}

% make title bold and 14 pt font (Latex default is non-bold, 16 pt)
\title{\Large \bf \tool: A Secure and Efficient  On-device
\\    Model Inference System for Convolutional Neural Networks}

\author{Zhichuang Sun\IEEEauthorrefmark{1}, Ruimin Sun\IEEEauthorrefmark{2}, Changming Liu\IEEEauthorrefmark{3}, Amrita Roy Chowdhury\IEEEauthorrefmark{4}, \\Long Lu\IEEEauthorrefmark{3}, Somesh Jha\IEEEauthorrefmark{5}
\\ Google\IEEEauthorrefmark{1}, Florida International University\IEEEauthorrefmark{2}, Northeastern University\IEEEauthorrefmark{3}, \\University of California, San Diego\IEEEauthorrefmark{4}, University of Wisconsin-Madison\IEEEauthorrefmark{5}
}
 
%for single author (just remove % characters)
%\author{
%{\rm Your N.\ Here}\\
%Your Institution
%\and
%{\rm Second Name}\\
%Second Institution
% copy the following lines to add more authors
% \and
% {\rm Name}\\
%Name Institution
%} % end author

\maketitle
%\thispagestyle{plain}
%\pagestyle{plain}

%-------------------------------------------------------------------------------
\begin{abstract}
%--------------------------------------------------
    With the increased usage of AI accelerators on mobile and edge devices, on-device machine learning (ML) is gaining popularity. Thousands of proprietary ML models are being deployed today on billions of untrusted devices.  This raises serious security concerns about model privacy.
    %Encrypting the models is the main deployed protection, yet found vulnerable to unsophisticated runtime attacks. 
    %TEE-based solutions face several challenges in practice. 
    %Simply moving model inference into mobile TEE will exhaust the limited TEE resources,
    %greatly increase the TCB size, and lose access to the untrusted hardware accelerators including GPU. 
    However, protecting model privacy without losing access to the untrusted AI accelerators
    is a challenging problem.
    In this paper, we present a novel on-device model inference system, ShadowNet. ShadowNet
 protects the model privacy with Trusted Execution Environment (TEE)
    while securely outsourcing the heavy linear layers of the model
    to the untrusted hardware accelerators. ShadowNet achieves this by transforming 
    the weights of the linear layers before outsourcing them and restoring the results inside the TEE.
    The non-linear layers are also kept secure inside the TEE.
    %With this split execution, ShadowNet keeps the TEE size small.
    %that protects the model privacy without losing access to the hardware accelerators 
    %ShadowNet uses the untrusted hardware for acceleration, including GPU.
    %ShadowNet restores the result and runs the nonlinear layers securely inside the TEE.
    ShadowNet's design ensures efficient transformation of the weights and the subsequent restoration of the results. We build a ShadowNet prototype based on
    TensorFlow Lite and evaluate it on five popular CNNs, namely, MobileNet, ResNet-44, MiniVGG, ResNet-404, and YOLOv4-tiny.
    Our evaluation shows that ShadowNet achieves strong security guarantees 
    with reasonable performance, offering a practical solution for secure on-device model inference.
\end{abstract}

%-------------------------------------------------------------------------------
\section{Introduction}
On-device machine learning is becoming increasingly popular as more and more AI accelerators are being used
in mobile and embedded devices, such as NPU~\cite{DBLP:journals/corr/abs-2112-02439}, GPU~\cite{mobilegpu} and Edge TPU~\cite{edgetpu}. 
A recent study~\cite{sun2020mind} has shown that thousands of mobile apps 
are using on-device machine learning (ML) for diverse applications, such as OCR~\cite{ocrapp}, 
face recognition~\cite{faceapp}, liveness detection~\cite{livenessapp}, ID card and bank card recognition~\cite{cardapp} and translation~\cite{translateapp}.
The benefits of on-device machine learning are obvious; it avoids sending user's private data to the cloud, 
saves the latency of back-and-forth communication and does not require a network connection.
Many ML applications use on-device ML even for real-time tasks, such as rendering a live video stream~\cite{zhang2020decomposable}, which
is not possible with traditional cloud-based ML on mobile devices.

However, with thousands of private models being deployed on billions of untrusted mobile devices, model theft is a real threat today~\cite{sun2020mind}.  Attackers are not only technically capable of but also financially motivated to steal these models~\cite{sun2020mind, xu2019first}.
Leakage of such proprietary models can cause severe financial loss to businesses -- accurate models help organizations maintain a competitive advantage and training the models requires a significant engineering effort. %(such as data collection, labeling and parameters tuning).

To make matters worse, existing proprietary models are found to be not well protected.
As shown by Sun et al. in \cite{sun2020mind}, 41\% of the models are stored in plaintext and 
can be downloaded along with the application packages.
Applications that protect the models (for example, by encrypting the models~\cite{fritzai}) are themselves vulnerable to
 run-time attacks that can extract the decrypted models from the memory \cite{sun2020mind}. 
Additionally, 54\% ML applications use GPUs for acceleration -- the task of protecting model privacy without losing access to the GPU accelerations is even more challenging.

Prior work  on secure model inference can be classified into two types: 
cryptography based
approaches~\cite{tfencrypted} and trusted execution environment (TEE) based approaches~\cite{darknetz,tramer2018slalom}. 
Both of these techniques face unique challenges for on-device model inference. Prior cryptography based approaches use either 
homomorphic encryption (HE)~\cite{tfencrypted,gilad2016cryptonets} or multi-party computation (MPC)~\cite{juvekar2018gazelle,riazi2018chameleon}. However, HE based techniques are 
orders of magnitude slower than the state-of-the-art (non secure) model inference. 
MPC based approaches involve multiple participants requiring network connectivity which is not suitable for real-time tasks or offline usage. In light of the above challenges, popular mobile platforms, such as Android 7, have made it mandatory to support hardware-backed (TEE) keystore~\cite{android7-cts}. However, prior TEE-based approaches suffer from a lot of drawbacks. First,
the TEE on mobile devices is designed for small critical tasks~\cite{ji2019microtee}, such as key management, while model inference is a
resource demanding task~\cite{juvekar2018gazelle}. Hence, supporting model inference on the limited resources, such as secure memory, of a TEE is challenging. 
Moreover, simply moving the model inference task inside the TEE would significantly increase the TCB size of the TEE (see Sec. \ref{sec:discussion}). An additional problem is the loss of access to hardware accelerators.  
%The state-of-art on-device model protection in industry is pure file-encryption based. Although the models are encrypted at rest, they have to be decrypted in memory before usage. So far, there lacks Hardware-backed (TEE) solution to protecting on-device models. On the other hand, 

\begin{table}[ht]
\centering
\caption{Comparison of ShadowNet with Related Work}
\label{tab:compare_works}
\resizebox{\columnwidth}{!}{
\begin{tabular}{c|c|c|c|c|c}
\hline
    Works & Model Privacy & Mobile TEE & Performance & GPU Access\\
\hline
Slalom \cite{tramer2018slalom} & & & \checkmark & \checkmark  \\
TensorScone \cite{kunkel2019tensorscone} & \checkmark &  & \checkmark& \\
Graviton \cite{volos2018graviton} & \checkmark &  & \checkmark & \checkmark \\
CryptoNets \cite{gilad2016cryptonets} & \checkmark & & & \checkmark \\
TF Encrypted \cite{tfencrypted} & \checkmark & &\checkmark &\checkmark  \\
OMG \cite{bayerl2020offline} & \checkmark &\checkmark & \checkmark & \\
\rowcolor{aliceblue} ShadowNet (Ours) & \checkmark & \checkmark &\checkmark &\checkmark \\

\hline
\end{tabular}
}
\vspace{-2mm}
\end{table}

To this end, we design a novel secure model inference system for convolutional neural networks (CNNs), ShadowNet.
The key idea of ShadowNet is based on the observation that the 
linear layers of CNNs usually take up  $\sim 90\%$ of the computational resources of the whole network. This is in line with previous research, such as Slalom~\cite{tramer2018slalom}. 
ShadowNet offers a novel scheme that allows the heavy linear layers of the model to be 
securely outsourced to the untrusted world (including GPU) for
acceleration without leaking the model weights.
ShadowNet achieves it by transforming the weights of the linear layers before 
outsourcing them to the untrusted world and restoring the results inside the TEE.
The non-linear layers are also kept secure inside the TEE. 
%Table~\ref{tab:compare_works} compares ShadowNet with the other related work.
%which transforms the weights of the linear layers before 
%outsourcing them to the untrusted world for acceleration, and 
%restores the results inside the TEE. The nonlinear layers are also kept secure inside the TEE. 
%With ShadowNet, we can outsource the heavy linear layers to the untrusted world (including GPU) 
%without leaking the model weights.
%Our security analysis shows that, ShadowNet does not give the attacker
%any advantage in learning the model weights.

We build a prototype of ShadowNet based on TensorFlow Lite~\cite{tflite} and OP-TEE OS~\cite{optee}
and evaluate it on five popular CNNs, namely,  MobileNet~\cite{howard2017mobilenets}, ResNet-44, ResNet-404~\cite{he2016deep}, YOLOv4-tiny~\cite{YOLOPaper}%AlexNet~\cite{krizhevsky2012imagenet},
and MiniVGG~\cite{simonyan2014very}.
Our evaluation shows that the ShadowNet's performance overhead is reasonable  --  it increases the model inference time by 0.6$\times$ to 1.6$\times$.
%GPU acceleration helps reduce model inference time from 1ms to 12ms, which has a large room foroptimization with the evolving AI software and hardware accelerators.
%These results are expected as we set the obfuscation ratio at 1.2, which means 
%there is an expected 20\% time increase for the transformed linear layers.
%The extra overhead caused by TEE is also very small, increasing model inference time by 4\% to 11\% 
%for AlexNet and MiniVGG respectively. 
Table \ref{tab:compare_works} compares ShadowNet with related prior work (see Sec. \ref{sec:related_work} for more details). Compared to the cryptography-based approaches~\cite{gilad2016cryptonets} that are usually orders of magnitude slower, ShadowNet provides a practical solution for securing on-device model inference. For instance, for a single image classification, CryptoNets takes around 570 seconds on PC 
while ShadowNet takes $<1$s on a smartphone. 
%We have also explored a variant of ShadowNet scheme, called Layerwise ShadowNet, which applies the ShadowNet transformation
%on a few selected layers to offer a trade-off between security and performance.
%Layerwise ShadowNet benefits significantly from the GPU acceleration and reduces the model 
%inference time by as much as 59ms(26\%) for AlexNet. 

In summary, this paper makes the following contributions:

\squishlist
\item We design a novel on-device model inference system for CNNs, ShadowNet,  which protects 
    the model privacy with a TEE while leveraging the untrusted hardware accelerators. 
\item We build an end-to-end ShadowNet prototype based on 
TensorFlow Lite. We propose novel optimizations to support efficient model inference inside a TEE with a small TCB that can be of independent interest.
\item We build a fully \textit{automated} model conversion tool that can transform a user provided CNN  to the corresponding ShadowNet model. Consequently, ShadowNet models can run seamlessly with user applications on popular ML platforms, such as TensorFlow Lite. %\new{For ARC: architecture-agnostic is a bit exaggerating, we support wide range of architecture, but for YoloV4-tiny, we needs some customization to handle the complex branches/merges. Any idea on how to lower the tone while emphasizing wide architecture support?}
%\item We present a formal security analysis of ShadowNet.
\item Our evaluation on five popular CNNs for mobile platforms, namely, MobileNets, ResNet-44, ResNet-404, YOLOv4-tiny and MiniVGG demonstrates ShadowNet's feasibility for real-world usage on a diverse range of CNN architectures. 
\squishend \vspace{-1mm}
We have open-sourced ShadowNet~\cite{shadownetgithub}.

%\vspace{-1mm}
\section{Background}

\subsection{Convolutional Neural Network}\label{sec:background:CNN} %\vspace{-2mm}
Convolutional neural network (CNN)~\cite{o2015introduction} is a class of deep neural networks which typically consists of an input and an output layer with a sequence of linear and non-linear layers stacked in between. 
The linear layers include convolutional layers and fully connected layers;
the non-linear layers include activation and pooling layers. Some CNNs, such as ResNet, introduce shortcut connections between convolutional layers adding branches and merges in the network structure.
Additionally, MobileNet introduces two new type of linear layers -- pointwise convolution and depthwise convolution. %Object detection networks, such as YOLOv4-tiny, use CNNs as the backbone and have multiple outputs.

\point{Convolutional Layer} %The convolutional layer is the core building block of a CNN. 
%(We will refer to it as the standard convolutional layer.)
The parameters of a convolutional layer consists of
a set of learnable  kernels. Each kernel is characterized by the width, height
and depth of the receptive field. The depth must be equal to the number of channels of the 
input feature map. For the convolutional layer (\textit{conv}) from our example CNN in Fig.~\ref{fig:example},
the input shape is $(222, 222, 3)$ where $3$ is the depth. It has $64$ kernels and the shape of the convolution kernel is $(3, 3, 3)$.%, which corresponds to the receptive field's height, width and depth. 

Let $h$, $b$, $d$ represent the height, width and depth of
the kernel $w$, respectively, and $(x, y)$ refer to the coordinates in the 2D output feature map. Formally, the convolution operation on a given image $I$ with kernel $w$
can be described as follows:
\vspace{-1mm}
\begin{equation} \small
    \mathrm{Conv}(I, w)_{x,y} =  \sum_{i=1}^{h}\sum_{j=1}^{b}\sum_{k=1}^{d}w_{i, j, k}I_{x+i-1, y+j-1, k}
    \label{eq:conv}\vspace{-2mm}
\end{equation}

Let $X$ and $Y$ denote the input and output, respectively, and $W=[w_1,\cdots,w_n]^T$ be the convolution filter. The corresponding convolutional layer is thus given by:
%and the output $Y$ is a three-dimensional matrix.
%\vspace{-2mm}
\begin{equation}\small
    \label{eq:conv}
Y = \mathrm{Conv}(X, W^T)   \vspace{-2mm}
\end{equation}\vspace{-5mm}
\point{Pointwise Convolutional Layer}
For this type of layer, the kernel height and width are both 1.

\point{Depthwise Convolutional Layer}
Depthwise convolution is a type of convolution which applies a single  
convolutional kernel for each input channel.
%In other words, each input channel and filter are binded. 
The number of input channels and the number of kernels are the same. Let $h$ and $b$ represent the height and width of the kernel $w$, respectively, and $(x, y)$ refer to the coordinates in the 
two-dimensional output feature map. 
For a given image $I$ and kernel $w$ which is a 2D matrix, the depthwise convolution $DWConv$ on input channel $c$ can be described as follows:
\vspace{-2mm}
\begin{equation}\small
    \mathrm{DWConv}(I^{(c)}, w)_{x,y} =  \sum_{i=1}^{h}\sum_{j=1}^{b}w_{i, j}I_{x+i-1, y+j-1}^{(c)}
\end{equation}

The depthwise convolutional layer is described as follows,
where $x_i$ represents $i$-th channel of input $X$.
\vspace{-1mm}
\begin{equation}\small
    \begin{split}
        Y &= \mathrm{DWConv}(X, W) \\
          &= [ \mathrm{DWConv}(x_1, w_1), \cdots, \mathrm{DWConv}(x_n, w_n)]
    \end{split}
\end{equation}
%For example, one commenly used activation layer is RELU layer; "average pooling" and "max pooling"
%are two popular types of pooling layer.

%Model weights comprises of the weights of the linear layers and the nonlinear layers.
%The linear layer weights include a group of convolution filters and biases. They occupy the
%majority of the model weights. Most of the nonlinear layers do not have weights, except for a few types
%like the normalization layer.
\point{Dense/Fully Connected Layer}
%The dense or fully connected layer is a common layer used in neural networks as well as CNNs.
%as the last layer for classification. 
The dense layer connects every input node to every
output node. It can be implemented as a pointwise convolutional layer. For example, a dense layer connect
$n$ input to $m$ output can be viewed as a pointwise convolutional layer that has $m$ kernels of size (1, 1, n).

\begin{comment}\subsection{On-device Model Protection}
ML models are core intellectual properties of model owners. Popular ML SDK providers like SenseTime\cite{sensetime}, Face++\cite{facepp}, are highly motivated to protect their models. For example, their business model of licensing ML SDK with offline models requires protection of proprietary on-device models. As shown by recent study on ML model protection in mobile apps\cite{sun2020mind}, model providers invest heavily in on-device model protection solutions, including encrypting model several times, encrypting both model and code, etc. However, their models are still extracted trivially with open-sourced tools. The extracted models are used by tens of popular apps, which are downloaded by millions of users.  

\end{comment}

\vspace{-2mm}\subsection{Trusted Execution Environment}
% TEE and Arm TrustZone: definition, workflow, usage of CA/TA
% limitation: small memory preserved, small TCB to be secure, do not include GPU
% ShadowNet use of 
A trusted execution environment (TEE) is a secure area of the main processor. 
It guarantees the confidentiality and integrity of the code and data loaded inside \cite{tee}.

Arm TrustZone~\cite{armtrustzone} is a popular TEE implementation for mobile devices. 
It is a hardware
feature available on both Cortex-A processors~\cite{armtrustzone-a}
(for mobile and high-end IoT devices) and Cortex-M processors~\cite{armtrustzone-m} (for low-cost
embedded systems). TrustZone creates a  “Secure World”, an isolated environment with tagged caches, banked
registers and private memory, for securely executing a stack of trusted software that includes a tiny OS and trusted applications
(TA). In parallel runs the “Normal World” which contains the regular (untrusted) software stack. Code in the
Normal World, referred to as the client applications (CA), can invoke the TAs in the Secure World. A typical use of TrustZone involves a CA requesting a sensitive service from a TA, such as signing or encrypting data.
Arm TrustZone has been widely used for security critical services, such as key management and 
Digital Rights Management (DRM) on smartphones.

OP-TEE (Open Portable Trusted Execution Environment)~\cite{optee} is an open-source
trusted OS running inside Arm TrustZone. It supports a wide variety
of mobile devices ranging from Arm Juno Board %Raspberry Pi 3 
to a series of Hikey boards.
It is also integrated with AOSP to run alongside Android OS.
OP-TEE OS usually reserves a small part of DRAM (for example, 32MB) as secure memory to minimize the performance impact on the Normal World applications.

\begin{figure*}[ht]
  \centering
  \includegraphics[width=0.6\linewidth]{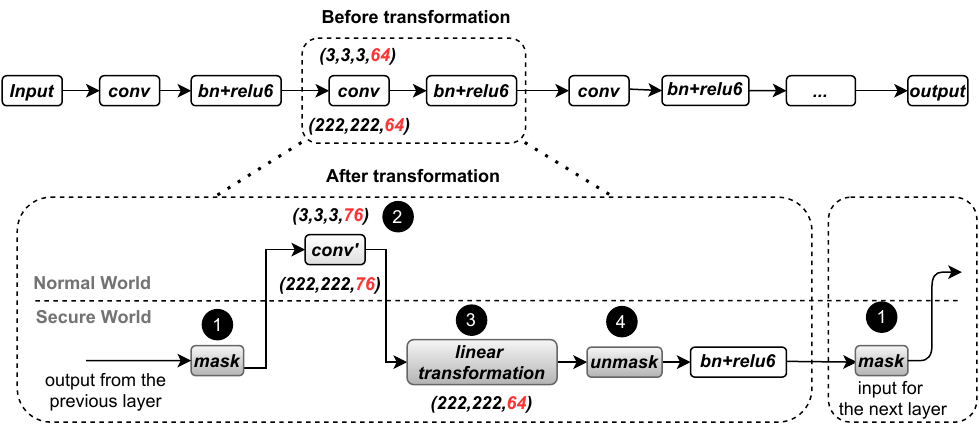}
   \vspace{-1mm} \caption{An overview of ShadowNet transformation on a simple CNN.} \vspace{-2mm}
    \medskip
    \small
    This CNN is a stack of convolutional layers and each convolutional layer (\textit{conv}) is followed by 
    a batch normalization (\textit{bn}) layer and a ReLU6 (\textit{relu6}) activation layer.
    The shapes of the weights are marked on the top of the box and the shapes of the outputs are marked under.
 %   For example, the \textit{conv} layer has 64 kernels of shape (3,3,3), and its output shape is (222, 222, 64). 
    The red color indicates the change in shape after the transformation.
    For each convolutional layer (\textit{conv}), the ShadowNet transformation works in four steps.
%    For each Convolution layer (\textit{conv}), the ShadowNet transformation works in four steps: (1) add a mask layer to the input;(2)
%    replace the original \textit{conv} layer with a transformed \textit{conv'} layer;(3) add a \textit{linear transformation} 
%    layer to restore the result from the \textit{conv'} layer; (4) unmask the input mask. 
%    The combination of \textit{mask}+\textit{conv'}+\textit{linear transformation}+\textit{mask} 
%    is equivalent to the \textit{conv} in the original CNN.
%    The Batchnorm layer and ReLU6 layer remain unchanged. 
    After the transformation, the \textit{conv'} runs in the Normal World, the other layers run in the Secure World. 
   % Note: the input for the first layer is from Normal World, thus not masked, while the output of the first layer is always masked. The output of the last layer will also be output to the Normal World while the input of the last layer is always masked.This is important for the security of the first and the last layer.\par
  \label{fig:example}\vspace{-5mm}
\end{figure*}

\section{Design Overview}
\label{sec:overview}

%\subsection{Problem statement}
%With on-device machine learning becoming more and more popular, thousands of machine learning
%models are deployed on billions of untrusted mobile devices. 
%Mobile apps that use on-device machine learning usually has high-performance demand, like real-time
%image rendering, so it may rely on hardware acceleration, like GPU or NPU.
%Existing solutions include encrypting the model before deployment and decrypt it before usage.
%It suffers from non-sophisticated runtime attack as models can be extracted from memory after decryption.
%So far, there lacks a secure on-device model inference system that can protect the model 
%weights without losing access to the untrusted hardware accelerators.
\vspace{-2mm}
\subsection{Design Goals}
Our secure on-device model inference system has the following design goals:
\squishlist
    \item Security that is rooted in the hardware so that the model remains secure even when the OS is compromised;
    \item Performance efficiency for supporting real-time analysis;
    \item Access to hardware accelerators for supporting on-device ML tasks.
\squishend
\vspace{-3mm}
\subsection{Threat Model}\label{sec:threat}

We consider a strong adversary who controls the Normal World (including the OS) and
observes everything that is exposed to the Normal World (including the GPU tasks). Our primary goal is to protect the \textit{model privacy}. Specifically, the adversary should not learn anything about a model, $\mathcal{M}(\cdot)$, beyond what is revealed by its querying API, i.e. $Y=\mathcal{M}(X)$\footnote{some extra information about the architecture, such as the number and type of linear layers, is also allowed -- see Sec. \ref{sec:sec_ana} for details)}.  A formal security analysis is presented in Sec. \ref{sec:sec_ana}.  We do not consider model inference integrity which can be achieved via verification techniques as proposed in Slalom~\cite{tramer2018slalom}. 
%It means the attacker can launch 
%Denial-Of-Service attack, or corrupt the model inference results in the Normal World, but not be able
%to steal the model parameters.
Additionally, we do not consider side channel attacks on the TEE -- we assume that the TEE can protect the confidentiality 
and integrity of the program and data inside it. Hence, for loading the model into the TEE all standard attestation techniques apply (such as, loading an encrypted model and decrypting it inside the TEE) and this is orthogonal to ShadowNet's design.
%\begin{itemize}
%    \item We assume a strong attacker that takes control of the Normal World including the GPU, 
%        observing everything exposed to Normal World.
%    \item We do not consider model inference integrity but only consider model privacy. The attacker can launch 
%        Denial-Of-Service attack, or corrupt the model inference results in the Normal World.
%    \item We do not consider side channel attack on TEE and trust that the TEE can protect the confidentiality 
%        and integrity of its program and data.
%\end{itemize}
\vspace{-2mm}\subsection{Design Challenges}\label{sec:challenges}
%Naively, we can put the model inference framework in the Secure World or Trusted Execution Environment(TEE).
While TEEs provide hardware-level security, using  mobile TEEs for secure model inference has several technical challenges.
First, mobile TEEs, such as Arm TrustZone, are designed for small security critical
services, such as managing encryption keys. The memory reserved for the TEE OS is limited. For example, only 14 MB is available for trusted applications of OP-TEE OS on Hikey960 Dev Board while the model size of ResNet-404 is 28 MB. 
Hence, it is not feasible to run
the resource-intensive model inference task inside the TEE (see Sec.\ref{sec:discussion}). Second, current TEEs do not 
include the GPU/NPU inside the secure domain. Hence, we would lose access to hardware acceleration. Third, the model inference
framework would also significantly increase the TCB size.

%There are other alternative designs like running a few selected layers inside TEE, which will leave the majority of model
%weights unprotected. The more layers we partition to TEE the less we can benefit from hardware acceleration
%and the more resource we will need from TEE. This simple model partition strategy does not address the three foundmental challenges
%faced by mobile TEE as mentioned above.

\vspace{-2mm}\subsection{Our Solution: ShadowNet}\label{sec:overview:sol}
The key idea of ShadowNet is as follows:
% Use an example to show how ShadowNet transform it into a model running in
% split mode with linear layers outsourced to GPU while nonlinear layers running inside TEE.

\begin{tcolorbox}[sharp corners]\vspace{-2mm}\point{Key Idea}
ShadowNet is based on the observation
that the linear layers of CNNs occupy the majority of the model weight and model inference time~\cite{tramer2018slalom}. For example, we observe that 
the linear layers of MobileNet occupy around 95\% of the model weights and 99\% of the model inference time.
The key idea is to obfuscate the 
 the weights of the linear layers by applying \textit{linear transformations} and outsource them to the 
untrusted world. This enables leveraging the hardware accelerators without trusting them.
ShadowNet then restores the results inside the TEE. The non-linear layers are kept secure inside the TEE.\vspace{-2mm}\end{tcolorbox}
%Only the transformed weights are shared with the
%untrusted world. The parameters used to restore the transformed linear layers and the weights of the
%other nonlinear layers are kept secure inside the TEE. 

%ShadowNet is based on the observation that the linear layers are responsible for most of the computational load and model weights. ShadowNet applies linear transformation on the weights of the linear layers 
%to obfuscate them. Consequently, the heavy linear layers can be outsourced to the untrusted hardware accelerators without leaking the original weights. The nonlinear layers remain unchanged and are computed inside the TEE.
\point{Example Application}
We use a simple example to show how ShadowNet works on a typical CNN as depicted in Fig.~\ref{fig:example}.
The example CNN is a stack of convolutional layers, and each convolutional layer (\textit{conv}) is followed by 
a batch normalization (\textit{bn}) layer and a ReLU6 (\textit{relu6}) activation layer.
%The Convolution layer (\textit{conv}) has 64 kernels of shape (3,3,3), followed by 
%a Batchnorm (\textit{bn}) layer and an activation layer (\textit{relu6}). 
%The output shape of \textit{conv} is (222, 222, 64). 
%The weights shape are marked on the box and the output shape are marked under the box. 

%\begin{figure}
%  \centering
%  \includegraphics[width=0.9\linewidth]{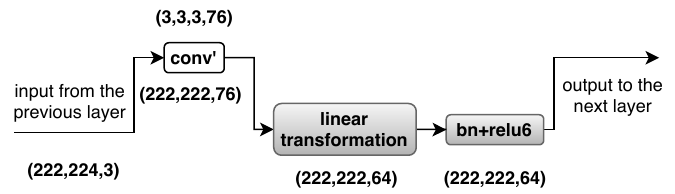}
%    \caption{A lite version of ShadowNet transformation on the example CNN.}
%  \label{fig:example-lite}
%\end{figure}

%We first introduce a lite version of ShadowNet, shown in Figure~\ref{fig:example-lite}. 
For each convolutional layer \textit{conv}, the ShadowNet transformation works in four steps: (1) adds a mask layer to the input; (2)
replaces the original \textit{conv} layer with a transformed \textit{conv'} layer; (3) adds a \textit{linear transformation} layer to restore
the result of \textit{conv'}; (4) unmasks the input. 
The combination of \textit{conv'}+\textit{linear transformation} is equivalent to \textit{conv} in
the original CNN.
The combination of \textit{mask}+\textit{conv'}+\textit{linear transformation}+\textit{unmask} 
is also equivalent to \textit{conv}.
The batch normalization layer and ReLU6 layer remain unchanged. 

The \textit{mask} and \textit{unmask} layers in step (1) and step (4), respectively, are introduced to prevent the adversary from observing the original input and output of the outsourced linear layers. Note that we embed the inputs in a field $\F$ via quantization before applying the mask layer. Additionally, all the weights of the convolutional layers are also quantized accordingly. The unmasked output is de-quantized before
being forwarded to the non-linear layers (for example, the activation layer). We discuss how these layers are
implemented in Sec.~\ref{sec:mask}.
%Step (3) and step (4) show the high-level overview of how ShadowNet works on a convolutional layer.
Note that the \textit{conv'} layer has 76 kernels instead of 64 kernels. This 
is due to the \textit{obfuscation ratio}, a tunable parameter in ShadowNet. 
In Sec.~\ref{sec:scheme_opt}, we explain the rationale behind the choice of this number and
how to generate the weights for \textit{conv'} layer and \textit{linear transform} layer. 
%We will also discuss how ShadowNet transforms the other type of linear layers, namely, pointwise convolutional, depthwise convolutional and dense/fully connected layers. 
%\lcm{centering this long caption does not look well}
\begin{figure}[!ht]\vspace{-2mm}
  \centering
  \includegraphics[width=0.6\linewidth]{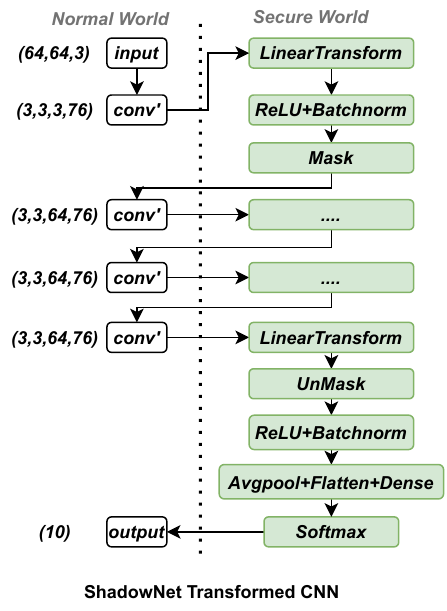}\vspace{-2mm}
    \caption{\textbf{An example of ShadowNet transformed CNN.}}
     The above example has four convolutional layers. %The shapes of the input, output and weights of the convolutional layers are marked in the figure. 
     Observe that for the first layer, \textit{only the output} is masked. For the last layer, \textit{only the input} is masked. This is because according to our threat model, the input to the first layer (model input) and the output of the last layer (model output) is known to the adversary.
  \label{fig:example-cnn}\vspace{-6mm}
\end{figure}

\point{Discussion} In summary, ShadowNet offers a novel model inference system that protects 
the model weights with a TEE while leveraging the untrusted hardware for acceleration. 
ShadowNet achieves its goal by transforming the computationally heavy linear layers' weights and masking their input 
before outsourcing them to the untrusted world and restoring the results inside the TEE.  All the non-linear layers are kept secure inside the TEE. With this design, \squishlist
    \item ShadowNet's security is rooted in the TEE, meeting the first design goal. 
    \item ShadowNet does not introduce any heavy cryptographic operations and our evaluation shows that ShadowNet is efficient
 -- this meets our second design goal. 
 \item 
ShadowNet is still able to use the accelerators which 
%and will be continuously benefited from the fast evolving
%hardware accelerators, 
meets the third design goal. 
\squishend
ShadowNet solves the technical challenges (Sec. \ref{sec:challenges}) of mobile TEEs by maintaining low memory usage and a small TCB which is detailed in Sec. \ref{sec:scheme} and Sec. \ref{sec:implementation}. 

\vspace{-1mm}\section{ShadowNet} \vspace{-1mm}
\label{sec:scheme}
In this section, we introduce ShadowNet.
First, we explain how ShadowNet applies linear transformation on a broad class of
linear layers, namely convolutional, pointwise convolutional, depthwise convolutional, 
and dense/fully connected layers. Next, we discuss the mask layer for input/output privacy. Finally, we describe an optimized implementation of the linear transformations for ShadowNet.  \\
\textbf{Notations.} Here, we introduce the notations we use for the rest of the paper. $\mathbb{F}$ denotes a field. $X$ and $Y$ denote the input and output of a convolutional layer. $X'$ and $Y'$ denote the masked input and output. $\hat{W}=\big[\hat{w_1},\cdots,\hat{w_n}\big]$ represents the transformed convolution filter corresponding to the original convolution filter $W=\big[w_1,\cdots,w_n\big]$ where $\hat{w}_i \thinspace (w_i)$ denotes the masked (original) convolution kernel. %Additionally, all random values (used for masking and the transformations) in ShadowNet are drawn uniformly at random from an interval $[-t,t]$ for some $t$\footnote{$t$ is sufficiently large and kept  secret.} $\in \mathbb{R}$.
For a positive integer $n \in  \mathbb{N}$, $[n]$ denotes the set $\{1,\cdots,n\}$. 
\subsection{Quantization}\label{sec:quantization}
All the transformations in ShadowNet work on a field $\F$. For this, we quantize all inputs and weights of a CNN to integers and embed them in the finite field $\mathbb{Z}_p$ modulo a prime $p$ ($p$ is sufficiently large to avoid wrap-around). This step is necessary for providing a formal security guarantee (see Sec. \ref{sec:sec_ana}). Following prior work \cite{tramer2018slalom,Gupta2015}, ShadowNet converts a floating point number $x$ to a fixed-point representation as $\tilde{x} = \textsf{FP}(x; l) := \textsf{round}(2^l \cdot x)$. After the computation of the linear layers, we de-quantize $\tilde{x}$ by scaling it by $2^{-l}$.
\subsection{Transformation of Linear Layers}
\label{sec:transform}
ShadowNet relies on linear transformation to obfuscate the weights of the linear layers.

\point{Linear Transformation}
Linear transformation is a function $f$ defined on vector spaces $V$ and $T$ over
the same field $\mathbb{F}$, $f : V \rightarrow T$. %In our scenario, $R$ is the field of real number.
For any two vectors $u, v \in V$ and any scalar $c \in \F$,
the following two conditions are satisfied: 
\vspace{-1mm}
\begin{equation}\small
    \begin{aligned}
        additivity:   f(u + v) &= f(u) + f(v) \\
        homogeneity: f(cu) &= cf(u)
    \end{aligned}\label{eq:linear_transformation}\vspace{-2mm}
\end{equation}

%We apply linear transformation on the weights of linear layers, then
%the linear operation is performed on the input and the transformed weights. 
%We apply linear transformation on the result to transform it back
%to what it should be when the operation is performed on the original weights. 
%The above property guarantees that the result on the input and the transformed
%weights is equal to the transformed result on the input and the original weights.

% what's linear transformation in mathmatic form

\point{Convolutional Layer} A convolutional layer is given by $Y = \mathrm{Conv}(X, W^T)$ where $X$ and $Y$ denote the input and output, respectively, and $W=[w_1,\cdots,w_n]^T$ is the convolution filter. A detailed discussion of convolutional layer is attached in Appendix \ref{sec:background:CNN}. Let $F=[f_1,\cdots,f_n]$ be a random filter such that each $f_i, i \in [n]$ has the same shape as $w_i$. Additionally, let $\Lambda$ be a diagonal matrix where the diagonal elements $\lambda_i \in \F, i \in [n]$ are random scalars. Conceptually, the linear transformation on the convolutional layer in ShadowNets  works as follows: 
\vspace{-2mm}\begin{equation}\small
    \label{eq:conv_w_to_t}
    \begin{split}
     \hat{W}^T &= W^T \cdot \Lambda  + F\\\Rightarrow
        [\hat{w}_1,\cdots, \hat{w}_n] &=  [w_1, \cdots,w_n]
                              \begin{bmatrix}
                        \lambda_1 & & \\
                         & \ddots & \\
                         & & \lambda_n \\
        \end{bmatrix} + [f_1,\cdots,f_n] 
    \end{split}
\end{equation}

Thus, each of the kernels are transformed as follows \begin{gather}\hat{w}_i=\lambda_i w_i+f_i\end{gather}

\begin{comment}
Now, note that

\begin{equation}\small
    \begin{split}
        [w_1, \cdots, w_n] &=  [\hat{w}_1-f_1, \cdots,\hat{w}_n-f_n]
                              \begin{bmatrix}
            \frac{1}{\lambda_1} & & \\
                         & \ddots & \\
                         & & \frac{1}{\lambda_n} \\
        \end{bmatrix}  \;\text{or} \\
                              W^T &= (\hat{W}^T - F) \cdot \Lambda^{-1}
    \end{split}
\end{equation}
\end{comment}

Hence, from the above transformation (Eq.~\eqref{eq:conv_w_to_t}) and the properties of linear transformation (Eq. \eqref{eq:linear_transformation}),  we have:\vspace{-1mm}
\begin{equation}\small
    \begin{split}
        \mathrm{Conv}(X, \hat{W}^T) &= \mathrm{Conv}(X, W^T.\Lambda + F)) \\
        & = \mathrm{Conv}(X, W^T) \cdot \Lambda + \mathrm{Conv}(X,F)
    \end{split}\vspace{-2mm}
\end{equation}

So, we can compute $\mathrm{Conv}(X, \hat{W}^T)$ on the untrusted GPU and restore the output $Y$ inside the TEE as follows:\vspace{-1mm}
\begin{equation}\small
    \label{eq:recover}
    Y = \big(\mathrm{Conv}(X, \hat{W}^T) - \mathrm{Conv}(X,F)\big)\cdot \Lambda^{-1}\vspace{-2mm}
\end{equation}%\vspace{-2mm}
Note that the computation of $\mathrm{Conv}(X,F)$ is done in the Normal World. We discuss an optimized implementation of  the above in Sec. \ref{sec:scheme_opt}.
\point{Pointwise Convolutional Layer}
The scheme for the standard convolutional layer can be directly
applied to the pointwise convolutional layer.

\point{Depthwise Convolutional Layer}
For depthwise convolutional layers, ShadowNet applies linear transformations on both the input and the kernels. 
Specifically, we $(1)$ shuffle the sequence of input/kernel channels and  $(2)$ obfuscate each input/kernel channel with a random scalar as detailed below.

Assume that the input has $n$ channels. Thus, the depthwise convolutional layer has $n$ kernels, one per channel.
Let $w_i$ represent the $i$-th kernel of the convolution filter $W$, where $W = [w_1, w_2, \ldots, w_n]^T$. 
Let $(\lambda_1, \ldots, \lambda_n) \in \F^n$ be a set of random scalars and $\pi \in S_n$ ($S_n$ is the group of permutations on $[1, \ldots, n]$) be a random permutation.
Additionally, let $P_{\pi}$ be a $n \times n$ permutation matrix corresponding to $\pi$ ($P_{\pi}(i, j) = 1$ if $\pi(i) = j$, and $0$ otherwise).
We scale and shuffle the sequence of the kernels in $W$ with $\Lambda$ as follows:
\vspace{-1mm}
\begin{equation}\small
    \begin{split}
        \Lambda &= \begin{bmatrix}
                        \lambda_1 & & \\
                         & \ddots & \\
                         & & \lambda_n \\
        \end{bmatrix} 
    \end{split} \; P_{\pi}\vspace{-2mm}
\end{equation}
We apply the same permutation to shuffle the input channels.
The input transformation matrix $A$ is defined as follows: 
\vspace{-1mm}\begin{equation}\small
    \begin{split}
        A &= \begin{bmatrix}
            \lambda_1^{-1} & & \\
                         & \ddots & \\
                         & & \lambda_n^{-1} \\
        \end{bmatrix} 
    \end{split} \; P_{\pi}\vspace{-2mm}
\end{equation}
Thus, for identity matrix $I$, we have:
\vspace{-2mm}
\begin{equation}\small
    A \cdot \Lambda ^ T = I \vspace{-2mm}
\end{equation}
Given the transformed weights $\hat{W} = [\hat{w}_1, \ldots, \hat{w}_n]^T$, the transformations on the input $X$ and weights $W$ are described  
as follows:
\vspace{-3mm}
\begin{equation}\small
    \begin{split}
        \hat{W}^T &= W^T \cdot \Lambda\\
        X' &= X \cdot A
    \end{split} \vspace{-3mm}
\end{equation}

Let  $Y' = \mathrm{DWConv}(X',\hat{W}^T)$. It is easy to see that: \vspace{-1mm}
\begin{equation}\small
    Y' = [ \mathrm{DWConv}(x_1, w_1), \cdots, \mathrm{DWConv}(x_n, w_n)] P_{\pi} \vspace{-1mm}
\end{equation}

Let  $P_{\pi}^{-1}$ be the inverse of $P_{\pi}$. 
We can restore the correct result with the following equation:
\vspace{-2mm}
\begin{equation}\small
    \label{eq:dw_restore}
    Y = Y' \cdot P_{\pi}^{-1}\vspace{-3mm}
\end{equation}

%\begin{equation}
%        B = 
%                              \begin{bmatrix}
%                                  & & 1 \\
%                                  & \iddots &\\ 
%                                  1 & &  
%                              \end{bmatrix}
%\end{equation}
%\begin{equation}
%        B = 
%                              \begin{bmatrix}
%                                  0 &1 &\ldots &0 \\
%                                  0 & 0 & \ldots & 1 \\
%                                  & &\vdots & \\ 
%                                  1 &0 &\ldots &0
%                              \end{bmatrix}
%\end{equation}

%As we can see, the scales in $A$ and $\Lambda$ are carefully chosen so the actual scaling will be 1,namely no scaling. 
%The sequence shuffling in $\Lambda$ and $B$ will be combined, so the input will be inverted twice. 

%We can easily verify that the transformed depthwise convolutional layer is equivalent to the
%original depthwise convolutional layer.
%
%\begin{equation}
%    \begin{split}
%        Y' \cdot B & = DWConv(X',T) \cdot B \\
%          &= [ DWConv(x_2, w_2), \cdots, DWConv(x_1, w_1)] \cdot B \\ 
%          &= [ DWConv(x_1, w_1), \cdots, DWConv(x_n, w_n)] \\ 
%         & = Y \\
%    \end{split}
%\end{equation}
%
%It is worth mention that the generated shuffled sequence is just an example for the ease of demonstration.
%In real system, we can shuffle the input channels randomly and keep the sequence as a secret. 
Note that both the transformation of the input and the restoration of the result are
performed inside the TEE while the depthwise convolution on the transformed filter
can be outsourced to the untrusted GPU. 

%It is worth mention that the special linear transformation on different channels can also be expressed as a
%pointwise convolutional layer. We can treat the transformation matrix $\Lambda$ as $n$ convolutional kernel of shape(1,1,n).
%So conceptually, we are inserting extra pointwise convolutional layers into the original network.

\point{Dense/Fully Connected Layer}
%The dense or fully connected layer is a common layer used in neural networks as well as CNNs.
%as the last layer for classification. 
Recall that a dense layer can be implemented as a pointwise convolutional layer. Hence, ShadowNet applies the same linear transformation as the one described for the standard convolutional layer.
\vspace{-2mm}
\subsection{Layer Input/Output Privacy}
\label{sec:mask}\vspace{-0.8mm}
We introduce the mask/unmask layer to protect the input $X$ and output $Y$ of the convolutional layers.
%Else, an adversary can infer the weights of the filter $W$ by solving a system of linear equations after observing  sufficient number of $(X,Y)$ pairs. This is discussed in more detail in Section~\ref{sec:sec_ana}.

\point{Mask Layer}
The mask layer adds a random mask to the input of a convolutional layer which is outsourced to the Normal World. Note that, in a typical CNN, the output of a convolutional layer output will be the input for the next layer. In the offline phase, ShadowNet generates random masks $M$ of the same shape as $X$ inside the TEE. The masked input $X'$ is defined as follows:\vspace{-3mm}
\begin{equation}\small 
    X' = X + M \vspace{-2.5mm}
\end{equation}
$X'$ is outsourced to the Normal World for the convolutaional layer with filter $W$. A fresh mask is used for every convolutional layer and for every round of model inference.
Since all the values are embedded in a field $\F$, this masking is equivalent to applying a one-time pad~\cite{KatzLindell}. 
\point{Unmask Layer} After obtaining the masked output $Y'$ from the Normal World,  the TEE restores the original value via: \vspace{-2mm}
\begin{equation}\small Y=Y'-\mathrm{Conv}(M,W)\vspace{-2mm}\end{equation}%\vspace{-2mm}
where the TEE pre-computes and stores the value of $\mathrm{Conv}(M,W)$ in an offline phase. This masking step is similar to Slalom \cite{tramer2018slalom}. 

%This masking step is essentially an stream cipher encryption scheme 
Recall from our threat model (Sec. \ref{sec:threat}) that the adversary already knows $Y=\mathcal{M}(X)$ where $\mathcal{M}$ is the CNN model. Hence, the input to the first layer (model input) and the output of the last layer (model output) is \textit{not} masked. %Only, the output (input) of the first (last) layer is masked. 
This is depicted in Fig. \ref{fig:example-cnn}. After $Y$ is unmasked inside the TEE, it is de-quantized and then, forwarded to the non-linear layers.\vspace{-0.5mm}
\subsection{Optimized Implementation}
\label{sec:scheme_opt}

In this section, we describe how ShadowNet implements the linear transformation for the convolutional layers. 
\par Recall from our discussion in Sec. \ref{sec:transform} that $F$ acts as a mask that protects the weights of the kernels.  However, the TEE needs access to $\mathrm{Conv}(X,F)$ for restoring $Y$ (Eq. \eqref{eq:recover}) -- the extra computation for $\mathrm{Conv}(X,F)$ is an performance overhead. This introduces a performance/security trade-off which is tackled in ShadowNet as follows:
\squishlist \item First, we select $r \in \mathbb{R}, r > 1$.  We refer to $r$ as the \textit{obfuscation ratio} and it is a parameter for tuning the performance/security trade-off. We elaborate on this later in this section.   \item Select at random $F=[f_1,\cdots,f_{m-n}]$ where $m=\lceil r\cdot n\rceil$. Each $f_i$ has the same shape as that of the kernels of $W$. \item Select a set of $n$ random scalars $(\lambda_1,\cdots,\lambda_n)\in \F^n$. \item Compute  \vspace{-1mm}\begin{equation}\small
    \label{eq:expand_weight}
    W' = [\lambda_1 w_1 + f'_1, \dots , \lambda_{n} w_n+f'_n, f_1, \dots, f_{m-n}]^T\vspace{-2mm}
\end{equation} 
where $f'_i$s are randomly chosen from $F$. Repetitive choice is allowed here as $m-n$ 
%\footnote{In case of $r=2$ or $m=2n$, we do not need repetitions and can have $f'_i=f_i$.} 
might be smaller than $n$. 
\item Store an index matrix $C$ 
%\footnote{We don't need the index matrix for $r=2$ since $C[i]=i$.}  
where $C[i]=j$ iff $f'_i=f_j$.
\item Finally, shuffle $W'$ with a random permutation matrix $P_{\pi}$ for $\pi \in S_m$.
\vspace{-2mm}\begin{equation}\small
    \label{eq:shuffle_weight}
    \hat{W}^T = W'^T P_{\pi} \vspace{-2mm}
\end{equation}\vspace{-2mm}
\squishend
All of the above steps can be pre-computed securely in an offline phase. With this transformation, we can easily recover the convolution results with the inverse of the permutation matrix $P_{\pi}$, the index matrix $C$ and
the random scalars $(\lambda_1, \ldots, \lambda_n)$. Conceptually, the recovery process can be implemented as a pointwise convolution with $n$ filters of shape $(1,1,m)$.
\par
Intuitively, the permutation $\pi$ prevents the adversary from distinguishing between the kernels in $\hat{W}$ that correspond to $F$ and the ones that correspond to (transformed) $W$. Clearly, higher the values of $r$,  better is the security and higher is the computational overhead. %Specifically, for $r=2$ (equivalently, $m=2n$), the weights of the true kernels are perfectly hidden at the cost of $n$ extra convolution operations of the form Equation \ref{eq:conv}. 
The formal security and performance analysis is presented in Secs. \ref{sec:sec_ana} and \ref{sec:obf-ratio}, respectively. Note that ShadowNet applies the aforementioned transformation to every convolutional layer of the CNN. %fresh random values ($M$, $\pi$, $F$, $\lambda_i$, $\Lambda$) for every layer and every round of model inference.

\noindent\textbf{Discussion.} Here, we discuss how ShadowNet's design addresses the challenges of mobile TEE as outlined in Sec. \ref{sec:challenges}. First, ShadowNet tackles the challenge of limited TEE memory by running only a subset of the layers of the model inside the TEE. Specifically, it leaves the resource-heavy linear layers outside the TEE. To further reduce ShadowNet's computational load, 
we propose the aforementioned optimized design which reduces TEE's overhead for restoring the value of the transformed linear layers. Additionally, we propose several novel optimizations for ShadowNet's implementation to reduce its memory consumption (Sec. \ref{sec:ca-ta}). Second, our design of the transformations provides a formal security guarantee (Thm.~\ref{thm:1}). Consequently, ShadowNet allows efficient outsourcing of the linear layers to the accelerators without compromising on security. Third, the new operations introduced by ShadowNet (for masking and linear transformations) are relatively simple. Hence, ShadowNet's implementation adds only $2,100$ LOC into the TCB (see Sec. \ref{sec:implementation}). Note that TensorFlow Lite has tens of thousands of lines of code and porting it as a whole inside the TEE would require a much larger TCB. ShadowNet, by design, keeps it outside the TCB, thereby supporting its rich functionality in a secure manner with a small TCB.
\vspace{-0.2cm}\section{Implementation}\label{sec:implementation}

%Proposed section struture:
%\begin{itemize}
%    \item Overview of the prototype including the workflow of adopting ShadowNet
%    \item Model convertion
%    \item TensorFlow extention
%    \item OP-TEE CA/TA for ShadowNet
%    \item Performance optimization(reduce memory usage; enchance speed;)
%\end{itemize}

%Our prototype uses Android P as the Normal World OS and OP-TEE OS as the Secure World/TEE OS.
%We choose Hikey960 board as our development board as it is similar to 
%a state-of-art smartphone and offers unlocked TEE development environment for Arm TrustZone. 
%It has four ARM Cortex-A73
%and four Cortex-A53 cores with 4GB SRAM and a Mali G71 MP8 GPU.
% Ruimin: {the above is repeated in the experiment setup in the evaluation. Consider remove one of them.}
% 

\begin{figure}
  \centering
  \includegraphics[width=0.6\linewidth]{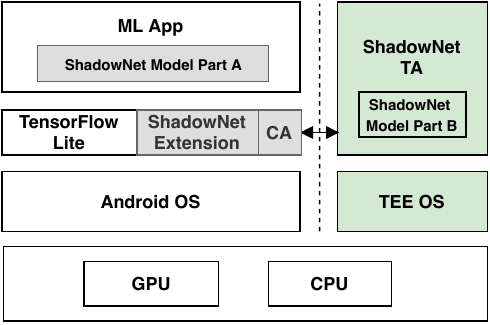}
    \caption{\textbf{The system architecture of Shadownet on mobile platforms.}
    %\medskip
    %  \small 
    Color grey shows the part modified by ShadowNet in the Normal World.
       Color green shows the part in the Secure World.  
       The ShadowNet model part A and B refer to
       linear and non-linear part, respectively. }
  \label{fig:system}\vspace{-5mm}
\end{figure}

%#In this section,
%#we start with an overview of the ShadowNet prototype, then 
%#we discuss three key components of the ShadowNet prototype: ShadowNet model conversion, TensorFlow Lite extension, 
%#and ShadowNet CA/TA.

\subsection{Overview of the ShadowNet prototype}
We implement the ShadowNet prototype as an end-to-end on-device model 
inference system as shown in Fig.~\ref{fig:system}.  Our prototype contains the ML mobile application, TensorFlow Lite runtime library 
with extensions for ShadowNet, ShadowNet client application (CA) and ShadowNet trusted application (TA).  ShadowNet runs the transformed linear layers in the Normal World and
uses the CA to communicate with the TA in the Secure World, 
which runs the other (non-linear) layers securely. 
ShadowNet models are split into two components -- Part A and Part B which run in the Normal World  
and the Secure World, respectively. During inference, Part A, which runs in TensorFlow Lite, processes the linear layers in the Normal World. 
When executing the non-linear layers, Part A invokes the CA to send commands to the TA which runs Part B inside the Secure World and passes the results back.
\vspace{-2mm}
\subsection{Model Conversion}\label{sec:implementation:conversion}
The model conversion mechanism can be divided into two steps as depicted in Fig.~\ref{fig:model-converter}.
%\arc{inconsistency in description}
\squishlist\item \point{Step I} The original model (left of Fig.~\ref{fig:model-converter}) is converted to the ShadowNet model (middle of Fig.~\ref{fig:model-converter}) with the help of \textit{four} new layers that transform the weights of the linear layers. %This model is functionally identical to the final model under deployment except it contains both the linear and the non-linear layers. 
\item \point{Step II} All the non-linear layers from the above ShadowNet model (red dotted in Fig.~\ref{fig:model-converter}) are replaced with a placeholder layer which represents the model in the Normal World and interacts with the Secure World. Now, we have two components of the model -- Part A contains the weights of the transformed linear layers and runs in the Normal World (right of Fig.~\ref{fig:model-converter}) while Part B contains the weights of the other layers in the Secure World (red dotted box in Fig.~\ref{fig:model-converter}). 
\squishend
%Given the original TensorFlow model, which can be a Python script with Keras API,  
%we follow a set of rules to transform the description into a full ShadowNet model description. 
%Based on it, we 
%(1) derive the model description with only linear layers and generate ShadowNet Model Part A. 
%(2) extract the weights of the nonlinear layers and generate ShadowNet Model Part B.

%The workflow of model conversion 
%is depicted in Figure~\ref{fig:shadownet-workflow}.
%Given a description for the original TensorFlow model, which can be a Python script with Keras API,  
%we follow a set of rules to
%transform the description into a full ShadowNet model description. 
%On the left branch,
%we derive its model description with only linear layers and generate ShadowNet Model Part A. 
%On the right branch,
%we extract the weights of the nonlinear layers and generate ShadowNet Model Part B.

%\begin{figure}[ht!]
%  \centering
%  \includegraphics[width=0.7\linewidth]{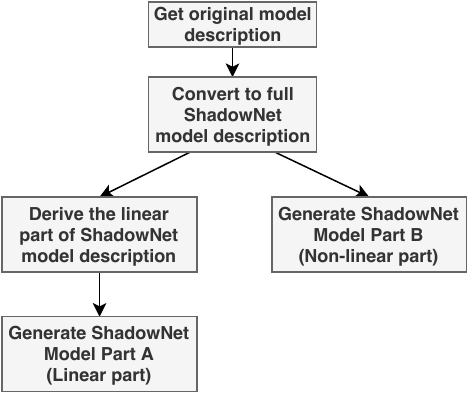}
%    \caption{The workflow of model conversion}
%  \label{fig:shadownet-workflow}
%\end{figure}

\point{Step I} Recall that, convolutional, pointwise convolutional, depthwise convolutional and fully connected/dense layers are considered to be the linear layers. We transform these layers in Step I by introducing four new  layers, namely, \textit{LinearTransform}, \textit{ShuffleChannel}, \textit{PushMask} and \textit{PopMask}.
\textit{LinearTransform} applies a linear transformation to the input.
\textit{ShuffleChannel} scales each channel in the input and shuffles their sequence.
\textit{PushMask}/\textit{PopMask} adds/removes a mask from its input respectively. %\new{Fig.~\ref{fig:model-converter}} gives a concrete example.
%\zs{Agree with changming, consider remove the details in the squishlist below.}
%\lcm{These bullets points are too detailed and repetitive given we already outlined the idea with the above text.}
In a nutshell,
\squishlist
    \item Every convolutional layer (including pointwise convolutional and fully connected/dense layers) is replaced  with four layers: \textit{PushMask} layer, transformed convolutional layer, \textit{LinearTransform} layer and \textit{PopMask} layer.
    \item Every depthwise convolutional layer is replaced with five layers: \textit{PushMask} layer, \textit{ShuffleChannel} layer, transformed depthwise convolutional layer, \textit{ShuffleChannel} layer and \textit{PopMask} layer.
    \item The weights for the convolutional layers and mask layers are quantized. The outputs from quantized layers are de-quantized accordingly before being forwarded to the non-linear layers.
    \item The batchnorm layers can be fused with its preceding convolutional layers for the ease of implementation (we do not show it in the example for the ease of exposition).
\squishend

%We use the name \textit{PushMask} and \textit{PopMask} to show that the mask and unmask layers are used in pairs.

\begin{figure}[h!] \vspace{-2mm}
  \centering
  \includegraphics[width=0.7\linewidth]{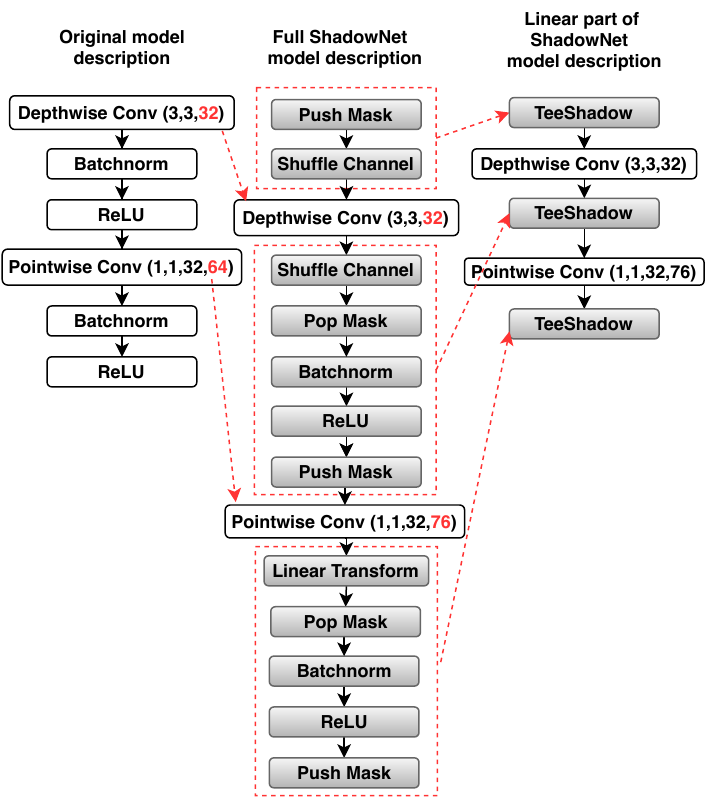}
    \caption{\textbf{An example of ShadowNet model conversion ($r=1.2$).} First, the
    convolutional layers are transformed. The pointwise convolutional layer's weight shape changes from (1, 1, 32, 64) to
    (1, 1, 32, 76), where $76 = 64 \times 1.2$. Next, the linear layers remain unchanged and the
    non-linear layers are replaced with a placeholder \emph{TeeShadow}. }
  \label{fig:model-converter}\vspace{-3mm}
\end{figure}

%\point{Model conversion rules}
%Model conversion has two steps. First, we 
%transform a given model description into a full ShadowNet model description; 
%Second, we derive the linear part of the ShadowNet model from the output of the first step. 
 
%All the other types of layers are considered nonlinear. 

%a) Nonlinear layers remain unchanged;
%b) For any Convolution layer (Pointwise Convolution and Dense layer included), replace it with four layers: PushMask layer, % obfuscated Convolution layer, LinearTransform layer, and  PopMask layer; %c) For any Depthwise Convolution layer, replace it with 5 layers: PushMask layer; ShuffleChannel layer; %obfuscated Depthwise Convolution layer, ShuffleChannel layer and  PopMask layer.

%During conversion, the weights for the ShadowNet model should be generated properly so that the converted
%ShadowNet model will be equal to the original model. 

%ShadowNet offers tools to help convert the model weights.
%For example, the weights conversion tool takes the original weights of the 
%Convolution layer and generate the weights for a transformed Convolution layer and a transformation matrix for the 
%LinearTransform layer with the algorithm described in Appendix~\ref{sec:alg}. 

%When we derive the linear part of the ShadowNet model,  
\point{Step II} In Step II, we introduce an additional layer called \textit{TeeShadow} that replaces all the non-linear layers between two successive linear layers. This newly introduced layer serves as a placeholder for the interposed non-linear layers. During model inference, \textit{TeeShadow} switches to the Secure World to handle the non-linear layers and returns back to the Normal World after its completion.
Note that all the newly added ShadowNet layers in Step I %(LinearTransform, ShuffleChannel, PushMask and PopMask) 
as well as the original non-linear layers are placed in the Secure World.  

 CNNs with shortcuts, such as ResNet~\cite{he2016deep} and InceptionNet~\cite{szegedy2015going},  contains branches and merges of different convolutional operations in its data-flow.  We address this by introducing a new operation called \emph{TeeMerge} as shown in Fig~\ref{fig:resnet}. 
Unlike \emph{TeeShadow} which takes a single input, \emph{TeeMerge} can take multiple inputs and produce the corresponding output.

 \begin{figure}
   \centering
   \includegraphics[width=0.9\linewidth]{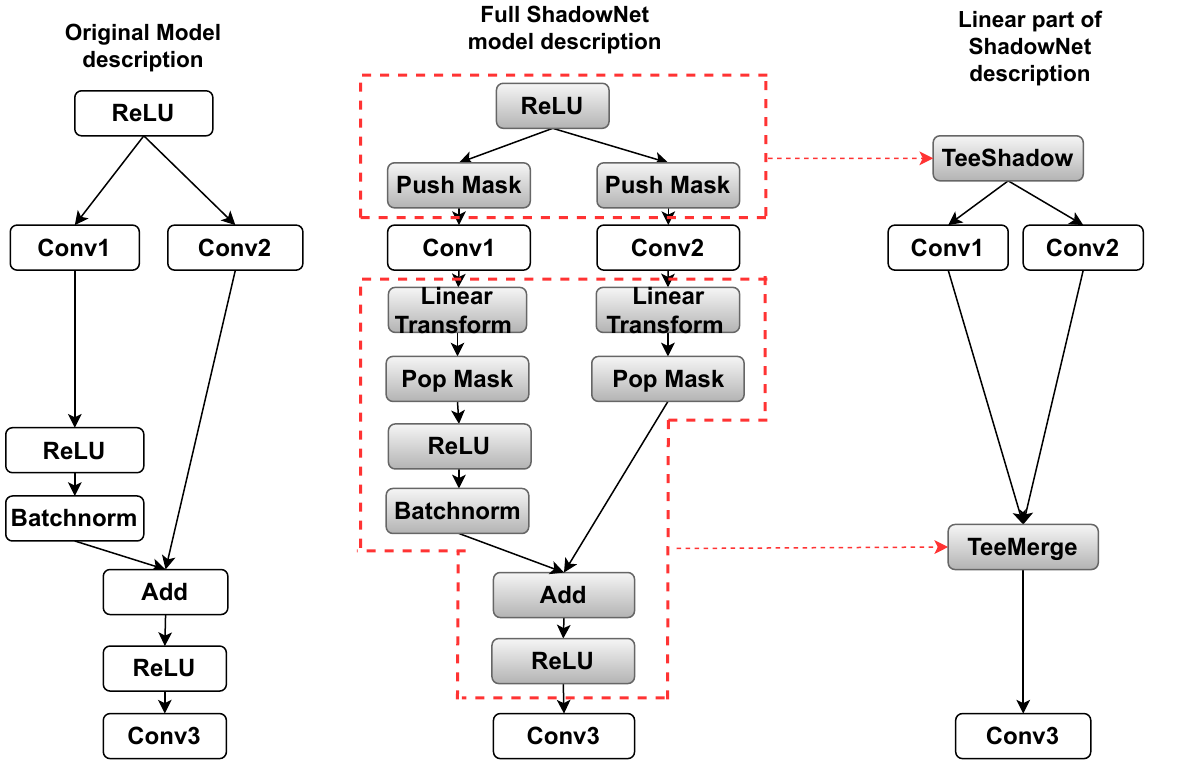}
     \caption{\textbf{An example of TeeMerge.} In the left part of the figure, a shortcut is introduced by ResNet (a convolution is removed from the left branch for simplicity).
    The middle figure shows the ShadowNet transformed model where the outputs from two different convolutions (\emph{Conv1} and \emph{Conv2}) 
    need to be merged. A new  layer, \emph{TeeMerge}, is introduced to take the inputs from the two branches and generate the corresponding output. 
     }
   \label{fig:resnet}\vspace{-5mm}
 \end{figure}
In a nutshell during Step II, the non-linear layers are replaced with \textit{TeeShadow} layers and the linear layers remain unchanged, as shown in Fig.~\ref{fig:model-converter}.

\vspace{-2mm}
\subsection{Adding ShadowNet Support in TensorFlow Lite} \label{sec:TF}

%TensorFlow is a Machine Learning framework where developers can define, train and run
%a model. TensorFlow Lite is based on
%TensorFlow, and only contains the model inference framework 
%for mobile and embedded devices. 
%It does not support defining or training a new model. 
%To use TensorFlow Lite, you need to convert an existing model to TensorFlow Lite format.
%ShadowNet needs TensorFlow  support to define the ShadowNet model with new layer types.
%ShadowNet also needs TensorFlow Lite support to run on mobile devices.

ShadowNet introduces five new operations that are not inherently supported in TensorFlow Lite (TF Lite), namely, \textbf{\textit{LinearTransform}}, \textbf{\textit{ShuffleChannel}}, 
\textbf{\textit{AddMask}}, \textbf{\textit{TeeShadow}} and  \textbf{\textit{TeeMerge}}. Note both PushMask and PopMask layers are implemented via \textbf{\textit{AddMask}}. 
To implement these operations, we extend TF Lite as follows:
 \squishlist
     \item We implement these operations as CustomOps for TensorFlow.
     \item We add these CustomOps in the Keras Layer API.
     \item We add the TF Lite implementations of the CustomOps.
%     \item Modify the TensorFlow Lite model converter, which converts TensorFlow model to TensorFlow Lite model, to support the
%         convert of ShadowNet models.
 \squishend

%Adding ShadowNet support in TensorFlow (Lite) is non-trivial. There are several challenges. First,
%\textbf{\textit{LinearTransform}} and \textbf{\textit{ShuffleChannel}} 
%involve computation on sparse matrix used by us, for which no existing compute library can be reused in TensorFlow (Lite). 
%Second, \textbf{\textit{TeeShadow}} operation has different
%output shapes even on the same input shape, so the output shape inference required by TensorFlow (Lite) 
%for resource allocation needs to be handled correctly. Third, the implementation of TeeShadow for TensorFlow Lite 
%involves interaction with the Trusted Application(TA) in the Secure World, requires 
%integrating TEE client library with the TensorFlow Lite library. 
%Fourth, converting a ShadowNet model from TensorFlow format to TensorFlow Lite format is not supported
%with the latest TensorFlow Lite model converter. We need to add support for the ShadowNet models.
% The tool is not fully stabilized. Different version
% of the tool would generate different output for the same input model,
% and the latest version generate a wrong model which has assertion error.
% We find a workaround by switching to a older version.

The extension is based on TensorFlow 2.2 which was the latest version at the time of our implementation.
To support CustomOp for TensorFlow, we add $563$ LOC in Python, $924$ LOC in C++; to support TF Lite, we add
$774$ LOC in C++. Our tool for model conversion is $1223$ LOC in Python.

%With the above support, we can use TensorFlow Keras API to write our ShadowNet model descriptions, and use ShadowNet
%tools to convert existing models into ShadowNet models. More importantly, for mobile apps using TensorFlow Lite models,
%they can switch to ShadowNet models seamlessly with ShadowNet-enabled TensorFlow Lite library.
%It is transparent to the user applications. Ideally, the developers do not need to change the application code at all.
\begin{tcolorbox}[sharp corners] \vspace{-2mm}\point{Remark 1} The motivation of integrating ShadowNet into TF Lite is two fold. First, it shows that it is feasible to apply ShadowNet on a mainstream on-device model inference platform. Second, adoption of ShadowNet simply requires updating the TF Lite library with our ShadowNet patch. This introduces no changes to the application logic, i.e., the model loading/inference interface remains the same as that of the standard TF Lite. Thus, our design of ShadowNet makes it practical for real-world usages. \vspace{-2mm}\end{tcolorbox}
\vspace{-4mm}
\subsection{ShadowNet CA and TA}
\label{sec:ca-ta}
%\point{The role of ShadowNet CA and TA}
The ShadowNet CA and TA work as follows. 
%ShadowNet CA handles the communication between the TensorFlow Lite runtime library
%and the ShadowNet TA.
%We implement a pair of ShadowNet CA and TA for OP-TEE OS to handle the secure model inference.
%ShadowNet CA is the client application in the Normal World,
%connecting TensorFlow Lite's TeeShadow operation and the TA. 
%When it is first called by the TeeShadow operation, it prepares
During the initialization phase, the CA starts a secure session with the TA
and loads the ShadowNet model Part B into the TA. Before each round of model inference, the
CA loads the pre-computed weights for the mask layers into the TA. 
During  model inference, the CA 
passes the parameters from the \textbf{\textit{TeeShadow}} operation to the TA and fetches results from it.
The TA performs the model inference task for the non-linear layers represented by the \textbf{\textit{TeeShadow}} (see Fig.~\ref{fig:model-converter}). 

The CA sets up a secure session with the TA and passes the parameters. The ShadowNet TA is implemented as a lightweight and generic model interpreter that can parse the model binary (Part B), and run partial model inference based on the CA's parameters. It can handle any model in the TensorFlow Lite format (flatbuffers).

We add $480$ and $2,100$ LOC in C for the CA and the TA, respectively. Specifically, the new operations  (\textbf{\textit{LinearTransform}}, \textbf{\textit{AddMask}}, \textbf{\textit{ShuffleChannel}}) add $<200$ LOC in the TCB as they are relatively simple operations.

%ShadowNet TA is the trusted application running inside TEE for the nonlinear layers. 
%As shows in Figure~\ref{fig:model-converter},
%every linear layer is followed by a group of nonlinear layers. Each group is treated as a subnet
%with unique input and output shape, depending on the position of the linear layer. For each subnet,
%there is a corresponding TeeShadow operation in the Normal World.
%% Ruimin: {what's the difference between TeeShadow and the `subnet` here?}
%Before model inference,
%the initialization includes: a) Initialize all of the subnets' structure for the nonlinear layers;
%b) Allocate memory buffer for the weights and load them into buffer.
%During model inference, it takes the output from a certain linear layer as its input and %run the inference of the following %nonlinear layers and send the results back to the CA.

%Different networks have different structures. 
%Ideally, we should auto-generate the CA and TA based on the description of
%the network structure. So far, we have not automated it yet in our prototype. 
%We write a customized CA/TA pair for each different network. 
%Fortunately,
%the CA/TA pairs for different networks share many code in common. 

\subsection{Optimizations}
\label{sec:implementation:opt}

\point{CA and TA's communication}
The OP-TEE OS is designed to work with a TA with a relatively small TEE memory usage ($<1$MB).
As a result, for every call the CA makes to the TA, the TEE OS sets up a page mapping for the entire memory used by the TA even when the CA and the TA are in the same session. This has an adverse effect on the performance -- when the TA's memory size increases, the associated cost of the CA's call to the TA increases proportionately even when the TA is idle inside the TEE. For example, the time increase from 0.1ms to 6ms for a memory increase of 1MB to 64MB. This TEE design limitation is known to be a major challenge in the general OP-TEE community~\cite{optee-issue1,optee-issue2}. In order to tackle this, we propose a novel optimization as follows. Note that only the parameters (handles for the memory shared between the CA and the TA) of the CA's call change and require a new mapping each time -- the majority of the TA's memory, namely the data and code, do not need to be re-mapped for every call.
%-- the TA's code and data remains the same. 
Based on this observation, we implemented an optimization in the TEE OS to cache the page mapping for the TA's code and data section. Only the pages corresponding to the parameters passed between the CA and the TA are re-mapped.  This improves the performance significantly. For instance for ResNet-44, the ShadowNet model inference time reduces from $106$ms to $57$ms -- a $2\times$ improvement. We believe that our key idea of caching certain pages could be of independent interest to the broader OP-TEE community.
\point{Optimizing TA's Memory Management}
The TEE OS has a limited memory reserved for the TA. 
Without careful memory management, the TA would exhaust the memory and crash. We tackle this challenge as follows.
%TA has very limited memory, 
First, we allocate memory statically in the TA to avoid fragmentation caused by dynamic memory allocation. Specifically, for a given model, the memory needed for each layer can be determined at the time of loading and allocated statically. Second, we do not allocate memory for the output of each layer. Instead, we allocate two sufficiently large buffers and rotate them to save memory.
\point{Optimizing TA's Performance}
Implementing the TA in OP-TEE has many restrictions. For example, 
since OP-TEE only supports C,
%while existing open-source Machine Learning frameworks are either in
%C++ or Python. 
we would lose access to  popular compute libraries, such as Eigen\cite{eigen}
and Arm Compute Lib \cite{armcomputelib} in C++. 
Additionally, OP-TEE lacks a math library. Hence, 
we propose efficient implementations of mathematical functions, such as \textit{sqrt}, \textit{exp}, 
or \textit{tanh}, for the non-linear layers. 

% Memory allocation: dynamic allocate vs static allocation
% Implementation: batchnorm layer, 
% float point computation:
% hardware 

%\point{Fusing nonlinear layers}
%Fusing continuous layers is a common optimization technique. By fusing operations, we can 
%avoid extra memory copy and improve performance. 
%Continus nonlinear layers that sharing similar data-access pattern can be fused into one operation.
%Element-wise operation like activation layer can always be fused with previous
%layer. Our ShadowNet TEE backend will always fuse continuous layers whenever possible.

\begin{table}[ht]\vspace{-1mm}
\centering
    \caption{Optimizations of the TA (with MobileNets model)}
\label{tab:optimization}
\begin{tabular}{l|c}
\hline
    Optimizations & Exec. Time (ms)  \\
\hline
    Baseline (Static mem. alloc)& 1500  \\
    %Newlib sqrt& 300 & Compile TA with \textit{-O2} \\
    (1) Neon sqrt &300 \\
    (2) Cache friendly& 245  \\
    (3) Optimize loop sequence & 205 \\
    (4) Preload weights & 100 \\
    (5) Neon for \textit{Batchnorm}, \textit{AddMask}& 90 \\
    (6) Neon for ReLU6 & 81 \\
\hline
\end{tabular}
\footnotesize
{\\
\raggedright \textit{Note}: The optimizations are applied incrementally in a sequence. For example, $81$ms is
    the execution time inside the TA (excluding mask weights reloading time) when optimizations (1) to (6) are all applied.}\vspace{-1mm}
\end{table}

%\point{Other TA optimizations}
Initially, we ported the non-linear layers from  Darknet~\cite{darknet13}, a deep learning framework for desktop written in C. However, the resulting TA was very slow on our Arm64 Dev Board. Hence, we propose the following optimizations to bring the performance at par with that of TensorFlow Lite. \squishlist
\item Use Arm Neon to optimize the \textit{sqrt} implementation in the Batchnorm layer. 
\item Swap the inner and outer loops when necessary to support cache-friendly data
access for multi-dimensional data. %\textcolor{blue}{For example, when accessing multi-dimension arrays in nested loops, the inner loop should always handles the last dimension where data are stored continuously in memory.} 
%\textcolor{red}{ARC: Address RA question -"Which inner and outer loops are being swapped?"} 
\item Move repetitive computation out of the loops and pre-compute it. 
\item Pre-load all the weights to avoid repetitive loading.
\item Use Neon multiply+add instructions to optimize the \textbf{\textit{Batchnorm}} and \textbf{\textit{AddMask}} operations. 
\item Use
Neon minimum/maximum instructions to optimize the activation layers, such as  ReLU6.
\squishend Table~\ref{tab:optimization} shows the execution times for our proposed optimization for the MobileNets TA. An illustration of the \textit{sqrt} optimization is presented in App.~\ref{sec:sqrt}
%Among them, preloading weights, cache awareness, loop sequence adjustment and \textit{sqrt} optimization have 
%contributed a lot during the optimization.
%We attach the details for }, which is an example of how we profile and optimize the TAs.

\point{Configuration of the TEE OS}
We also increase the size of the secure memory reserved for the TEE OS from 16MB to 64MB to accommodate a larger Part B. Additionally, we changed the size of the reserved shared memory from 2MB to 8MB. This shared memory is used for communications between the Normal World and the Secure World. These changes only require a few lines of configurational code in the TEE OS and Arm Trusted Firmware. No change is required in the Normal World OS, such as Android/Linux.

% Ruimin: {add some conclusion of the results.}

% Ruimin: {Other notes: I wonder if Table tab:optimization and tab:sqrt should stay here or be moved to the evaluation section. I think the results are interesting and should be discussed more thoroughly. For example, how much time reduction is achieved after applying each layer or type of optimization? A specific optimization is most useful because half of the layers can use it. If the target model is different from MobileNets, for example, model X that has more Y layers, then optimization Z will be more useful than the others. }

%\input{analysis}
%\input{shadownet_design}
%\input{observation}
%\input{design}
%\input{demotext}

\vspace{-1mm}\section{Evaluation}
\label{sec:eval}
%\arc{add implmentation details of YOLO, add alll new results}
%\zs{added in the introduction of "models"}
%The system reserves 16 MB RAM for TEE OS, of which 14 MB can be used by TA.
%Our ShadowNet system is based on TensorFlow 2.2.

Our evaluation focuses on four questions:
\squishlist
    \item Correctness: Does ShadowNet produce the same result as the original model?
    \item Efficiency: What is the overhead introduced by ShadowNet?
    \item Obfuscation Ratio: What is the impact of the obfuscation ratio on the correctness and performance of ShadowNet?
    \item Security: How secure is ShadowNet?
\squishend
%We will answer the above questions in this section.
\vspace{-2mm}

\point{Configuration} %We use Hikey960as our development board which has Arm TrustZone support. 
We perform the evaluation on the Hikey960~\cite{hikey960} board equipped with Kirin 960 SoC
with 4 Cortex A73 + 4 Cortex A53 Big.Little CPU architecture, ARM Mali G71 MP8,
and 3GB LPDDR4 SDRAM.
We run Android P in the Normal World and OP-TEE OS 3.9.0~\cite{optee-aosp} in the Secure World. We use the field $\mathbb{Z}_p$ for prime $p=2^{24}-3$ and a fixed-point representation of $l=8$ for quantization. The obfuscation ratio is set to be 1.2. % We discuss the impact of obfuscation ratio in Section~\ref{sec:obf-ratio}.}
\point{Models} We evaluate ShadowNet on five popular models -- MobileNet~\cite{Mobilenets_Code}, ResNet-44~\cite{Resnet44_Code}, ResNet-404~\cite{Resnet404_Code}, YOLOv4-tiny~\cite{YOLOv4-tiny_Code} and MiniVGG~\cite{Minivgg_Code}. The rationale behind choosing these models is that they cover a wide range of CNN architectures. MiniVGG is derived from VGG which represents a classical CNN architecture.
 MobileNet uses pointwise  and depthwise convolution which are convolutional layers customized for mobile devices. ResNet uses shortcut connections between the
convolutional layers which create branches and merges in
the network structure.  YOLOv4-tiny is an object-detection model with a complex CNN architecture. First, object detection requires a multi-task model/multi-objective optimization which outputs a prediction class and a bounding box. Second, the model structure is non-sequential with new activation operations, such as LeakyRelu~\cite{DBLP:journals/corr/RedmonDGF15}, and complicated Lambda layers consisting of different non-linear operations, such as  upsampling, concatenation. Third, the original model size is 34MB which is large in the context of mobile environments. 

\point{Datasets} MobileNet is evaluated on the ImageNet-2012 dataset \cite{ILSVRC15} with 50K images. ResNet-44, ResNet-404 and MiniVGG are evaluated on the CIFAR-10 dataset \cite{cifar10} with 10K images. 
We evaluate YOLOv4-tiny on the VOC2007 testing dataset~\cite{voc}.

\begin{figure}
  \centering
  \includegraphics[width=0.75\linewidth]{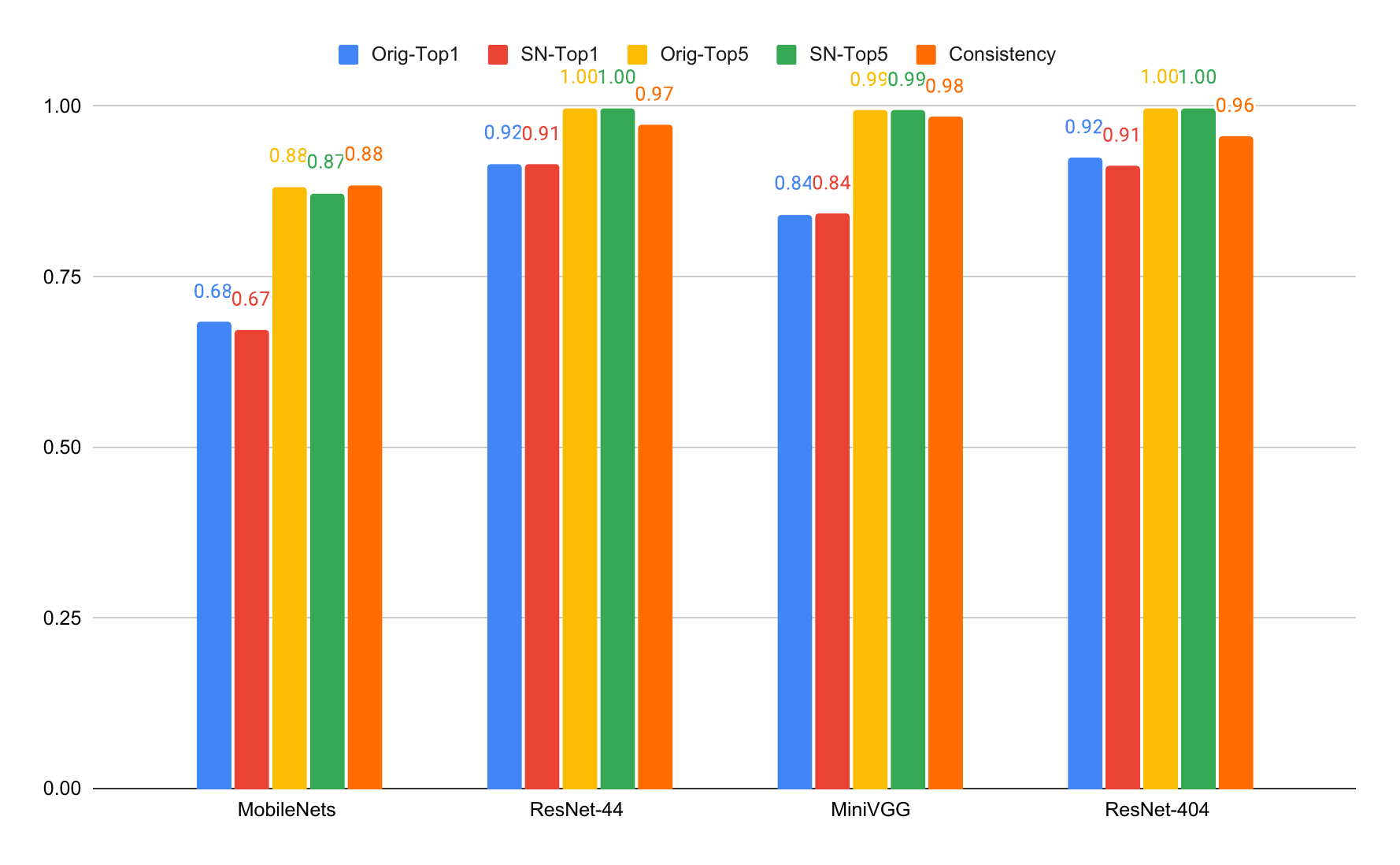}\vspace{-2mm}
    \caption{\textbf{Prediction accuracy and consistency evaluation.}
    Orig-Top1 refers to the original model's top-1 accuracy, SN-Top1(5) refers
    to the top-1(5) accuracy of the ShadowNet transformed model. Consistency measures the \% of ShadowNet predictions that match with that of the original model.
    }
  \label{fig:accuracy}\vspace{-8mm}
\end{figure}

\vspace{-2mm}\subsection{Correctness}\label{sec:correctness} %\vspace{-2mm}
We evaluate the correctness of ShadowNet by comparing the prediction accuracy and consistency before and after model transformation. 
Consistency checks whether ShadowNet top-1 prediction is consistent with the original model on the same input.
Fig.~\ref{fig:accuracy} shows our evaluation results on four models -- MobileNet, ResNet-44, ResNet-404 and MiniVGG.
 For YOLOv4-tiny, we calculated mean average precision (mAP) based on 50\% Intersection Of Union which is the standard metric for object detection. The accuracy of the original model and ShadowNet is $57.06\%$ mAP and $55.90\%$ mAP, respectively. Our implementation is based on the pre-trained model from \cite{YOLOv4-tiny_Code}. Overall, we observe that ShadowNet has accuracy comparable to that of the original models ($\sim1\%$ decrease is due to the numerical errors from quantization -- an essential step for security). 
 
 ShadowNet is $94\%$ consistent for YOLOv4-tiny.  ShadowNet's Top1 prediction consistency for ResNet-44, ResNet-404, MiniVGG and MobileNet are $97\%$, $96\%$, $98\%$ and $88\%$, respectively. The relatively low consistency for MobileNet is due to the inputs which are predicted \textit{incorrectly} by the original model. Specifically for inputs with correct predictions from the original model, ShadowNet is $95\%$ consistent; for the incorrect predictions, ShadowNet is $74\%$ consistent. The original model's mean confidence (top-1 score) for the inputs with correct and incorrect prediction is $0.8$ and $0.38$, respectively. Recall that MobileNet is evaluated on ImageNet with 1K classes. Hence for classification tasks with such a large number of classes, ShadowNet's numeric errors can impact inputs that have low confidence scores from the original model (for instance, the top-2 scores are very close). The performance is acceptable since this affects mostly the inputs with \textit{incorrect} predictions from the original model. 

\subsection{Efficiency}\label{sec:efficiency}
%We evaluate the efficiency of ShadowNet on MobileNet, ResNet-44, \new{ResNet-404, YOLOv4-tiny} and MiniVGG. 
We use the model inference time as our metric which measures the time span between feeding an image
and getting the classification result.
%Then, we measure the performance impact of the TEE communication and GPU acceleration. 
%jWe also evaluate an interesting variant of 
%ShadowNet scheme, the Layerwise ShadowNet, for which we apply ShadowNet scheme on a few selected linear layers.

\point{Experimental Highlights}
Our evaluation shows that:\squishlist \item ShadowNet results in a reasonable overhead, 
increasing the model inference time by $0.6\times$ to $1.6\times$. 
%A significant part of
%the overhead comes from refreshing the mask layers' weights. 
%ShadowNet is much more practical comparing with the crypto-based weights protection schemes, whose overhead are usually orders of magnitude higher.
%mode. The user experience of real-time object detection is not affected in a noticeable way. 
\item  GPU acceleration can reduce the model inference time for all three models from $1$ms to $30$ ms. 
There is still a large room for improvement with software and hardware updates.
%as the ShadowNet models forces the execution to be split between CPU and GPU.
%This can be improved by optimizing the CPU/GPU communications in ShadowNet. 
\vspace{-4mm}
\squishend
\point{Methodology}
We use the TensorFlow Lite Android image classifier Demo application \cite{tfdemo} developed by Google to evaluate the end-to-end
model inference time. We evaluate ShadowNet under different settings --
(1) the original model on CPU; (2) the ShadowNet transformed model on CPU; (3)
the original model on GPU; (4) the ShadowNet transformed model on GPU.
%\end{itemize}
%\begin{itemize}
%\item the original model on CPU \item the ShadowNet transformed model on CPU \item
%the original model on GPU \item the ShadowNet transformed model on GPU
%\end{itemize}
% TODO obfuscation ratio impact
\\
Fig.~\ref{fig:sn-perf} shows the model inference time for ShowdowNet.  

%Comparing with the original model, ShadowNet model adds extra nonlinear layers, 
%increases the computation for linear layers, and also adds communication overhead between
%CA and TA. The extra computation is inherent to ShadowNet scheme while the TEE communication
%overhead varies on how the CA/TA is implemented. 

\begin{figure}
  \centering
  \includegraphics[width=0.75\linewidth]{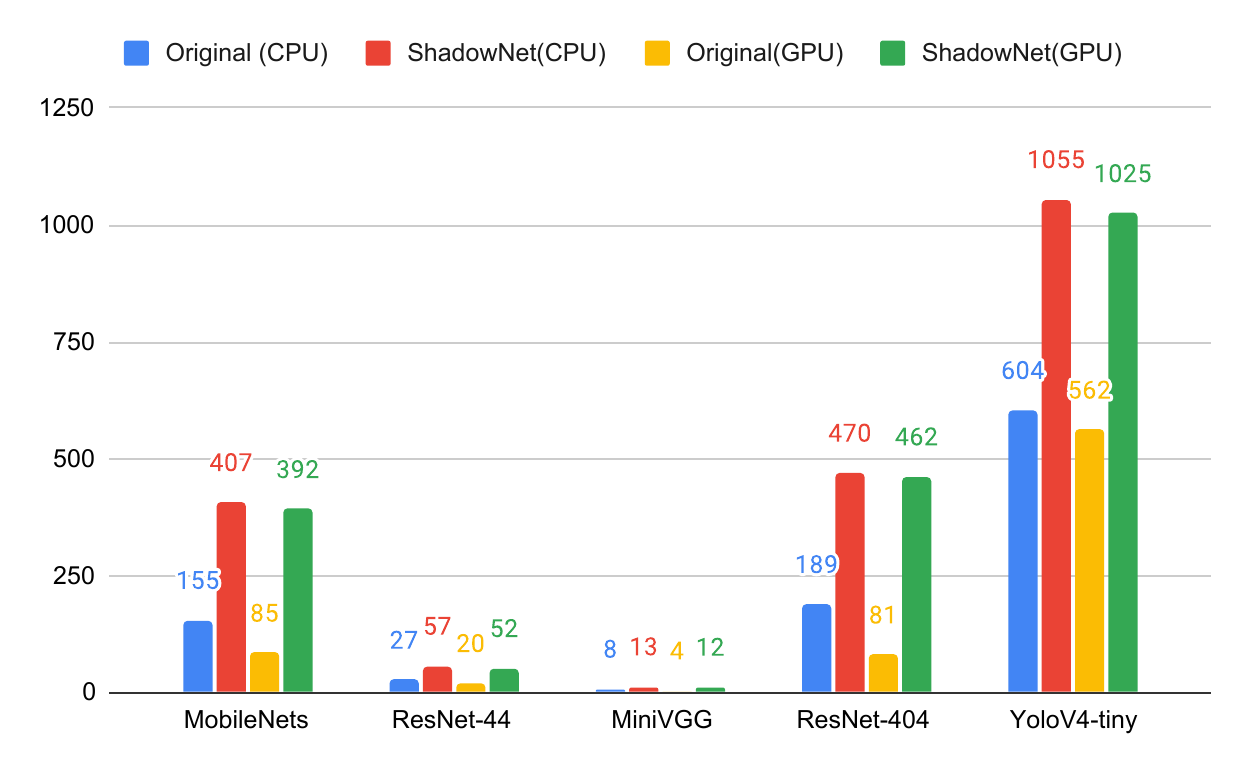}
 \vspace{-3mm}   \caption{\textbf{ShadowNet performance evaluation (ms).} 
    We evaluate ShadowNet on five CNN models: MobileNet, ResNet-44, MiniVGG, ResNet-404 and YOLOv4-tiny ($r=1.2$).
    For each model, we measure the model inference time for one image on four different settings:
    original model on CPU, ShadowNet model on CPU, original model on GPU and ShadowNet model on GPU. We report the mean of 100 trials. 
    }
  \label{fig:sn-perf}\vspace{-5mm}
\end{figure}

\point{ShadowNet's Performance on CPU}
Compared with the original model in the CPU mode,  ShadowNet 
incurs an overhead of 252ms $(1.6\times)$, 30ms $(1.1\times)$, 5ms $(0.6\times)$, 281ms ($1.5\times)$ and 451ms $(0.7\times)$ 
for MobileNet, ResNet-44, MiniVGG, ResNet-404 and YOLOv4-tiny, respectively.
The overhead is reasonable as ShadowNet refreshes the masks for each round of model
inference. Table~\ref{tab:sd-size} shows the model size before and after the ShadowNet transformation. 
For example, for MobileNet, the weights of the mask layers has a size of $37$MB
which takes around $200$ms to be updated. For ResNet-$44$, the mask size is $3.3$MB which takes $18$ms to be updated. 

%\zs{Do we need to emphysize our optimization on the CA/TA switch here? We should have explained it in the implementation section.}
%The Normal World and Secure World (TEE) switches can also cause extra overhead; there are 26 and 44 world switches for MobileNets and ResNet-44, respectively.
%This is because the current design of the TEE OS redoes the mapping of the TA's pages for every world switch. A possible optimization would be to cache the page mappings inside TEE, which can reduce the overhead by hundreds of milliseconds. We leave this as part of future work.

%As a result, its performance is less affected by ShadowNet (61\%).
%ShadowNet increases the model inference time by 293\% for ResNet-44 which requires significantly more world switches (44 in total). 
%Compared to the crypto-based  approaches which are orders of magnitude slower\cite{gilad2016cryptonets}, the performance of ShadowNet is quite good. 

\begin{table}[!ht]
\centering
\caption{ShadowNet model size change.}
\label{tab:sd-size}
\resizebox{\columnwidth}{!}{
\begin{tabular}{c|c|c|c|c|c|c}
\hline
    \multicolumn{2}{c|}{Model Size(MB)} & MobileNet & ResNet-44 & ResNet-404 & YOLOv4-tiny & MiniVGG \\
    \hline
    \multirow{2}{*}{\begin{tabular}[c]{@{}c@{}}Original\\ Model\end{tabular}} &   \textit{Conv} & 17 & 2.43 & 24.71 & 22.56 & 20 
    \\\cline{2-7} & \textit{Other} & 0.175 &0.029 & 0.23 & 0.79 &  0  \\ 
    \hline
    \multirow{2}{*}{\begin{tabular}[c]{@{}c@{}}ShadowNet\\ Model \end{tabular}} & \textit{Conv} & 20& 2.92 &   29.79   & 27.11 &24  
    \\\cline{2-7} & \textit{Other} & 37 &3.3 &29.73 & 40.53 & 2  \\
\hline
\end{tabular} \vspace{-2mm}
}
\footnotesize
    {
        \\Model weights are divided into two parts -- \textit{Conv} layers' weights (standard convolutional, pointwise convolutional,
    depthwise convolutional and dense layers) and \textit{Other} layers' weights (\textit{AddMask}, \textit{LinearTransform}, \textit{ShuffleChannel}, \textit{Batchnorm}).
    Among the \textit{Other} layers, the weights of the \textit{AddMask} layer occupy the maximum space for ShadowNet models.} \vspace{-2mm}
\end{table}

Despite having an additional 5ms to 451ms latency depending on the original model size, we find that the impact on the user is acceptable. We test this by running Demo to detect objects in real-time. For MobileNet,  Demo processes around six images per second for
the original  model, and two and a half images for ShadowNet.
For small models, such as MiniVGG, the extra $5$ms latency is imperceptible to users.

%Comparing with ShadowNet without TEE in CPU mode,  ShadowNet with TEE 
%only incurs 3ms (11\%), 9ms (4\%) and 11ms (7\%) extra overhead for MiniVGG, AlexNet and MobileNets, respectively.
%%The extra overhead of TEE mode is caused by the 
%%TEE communication and the customized CA/TA implementation for nonlinear layers.
%This shows that the extra overhead introduced by TEE is negligible comparing with the ShadowNet transformation.
%Based on this observation, we measure the Layerwise ShadowNet scheme without involving TEE,
%avoiding the engineering effort for implementing the customized CA/TA pairs.

\point{Performance Impact from GPU}
Running on GPU reduces model inference time by 15ms (4\%), 5ms (9\%) and 1ms (8\%), 8ms (2\%), 30ms (3\%) for
MobileNet, ResNet-44, MiniVGG, ResNet-404 and YOLOV4-tiny, respectively.  The benefit of GPU acceleration
for ShadowNet model is not as good as that for the original models due to the following reasons. First, the GPU speedup ($\sim 2\times$) on our evaluation board is significantly less than the GPU speedup 
in the cloud. Any advancement in on-device GPUs would immediately improve ShadowNet's performance. Second, ShadowNet requires splitting the model inference between the CPU (TEE mode) and the GPU. The interleaving between the CPU and the GPU causes extra overhead for repeated setting up of the GPU jobs.  
%This switching overhead, in fact, nullifies the performance gain from the GPU for ResNet-44.

We would like to remark that mobile GPU acceleration for on-device ML is still a developing research area~\cite{mobilegpu}. 
Hence, ShadowNet's benefit is that it opens the door for leveraging future advances in on-device accelerators.

\point{Offline Time} Offline model conversion takes $10$min $59$s for MiniVGG, $6$min 5s for ResNet-44, $31$min $23$s for MobileNet, $28$min $46$s for ResNet-404, $24$min $23$s for YOLOv4-tiny. Regeneration of mask layers' weights for $100$ instances takes $2.19$s for MiniVGG, $20.29$s for ResNet-44, $97.26$s for MobileNet. $127.9$s for ResNet-404, $25.5$s for YOLOv4-tiny.
\begin{comment}
%\point{Comparison with Other TEE-based Solutions} 
%Here, we compare the performance of ShadowNet with two existing TEE-based secure ML systems: Graviton~\cite{volos2018graviton} andTensorScone~\cite{kunkel2019tensorscone}. Graviton~\cite{volos2018graviton} implements a TEE
inside the GPU,  allowing secure ML tasks to run in an isolated part of GPU with an overhead of 17-33\%. 
TensorScone~\cite{kunkel2019tensorscone} is a secure ML system 
built on the Intel SGX-based secure container called SCONE\cite{arnautov2016scone}. The reported model 
inference time of a single image with InceptionV4\cite{szegedy2015going} model is around $1.18$ s
under TensorScone SGX hardware mode. Without TensorScone, the native model inference time on CPU is around $0.35$ s.
Graviton is more efficient than ShadowNet as it changes the GPU and provides security at hardware level. 
TensorScone uses Intel SGX, and it has worse performance (around 235\% overhead) than ShadowNet
as it runs the whole model inference inside SGX which also suffers from limited secure memory. 
TensorScone does not support GPU acceleration. ShadowNet overcomes this limitations with a novel scheme that requires less 
memory from TEE while outsourcing the linear layers securely to GPU.
\end{comment}
\vspace{-2mm}
\subsection{Obfuscation Ratio}\vspace{-1mm}
\label{sec:obf-ratio}
%In Section~\ref{sec:transform}, we introduced the obfuscation ratio $r$ which has a range of $[1.0, 2.0]$. 
%Our scheme remains secure under different $r$.  
%Various $r$ can also help hide the original shape of the weights. For example, the attacker will not know the actual number filters for the Convolution layer by observing the transformed layer. 
Fig.~\ref{fig:obf_ratio} shows
how the accuracy, size and inference time of the MobileNet model vary with $r$. 
We observe that both \textit{top1} and \textit{top5} accuracy remain almost unchanged as $r$ varies from $1.0$ to $2.0$.
This shows that ShadowNet's accuracy does not depend on the obfuscation ratio. 
Another observation is that the model size increases linearly with $r$ and so does the model inference time. 
This is intuitive because both the weight size and the amount of computation for the convolutional layers grow linearly 
with $r$. See Sec. \ref{sec:sec_ana} for more discussion on how to choose $r$. 

\vspace{-2mm}\subsection{Security Analysis}\vspace{-1mm}
\label{sec:sec_ana}

 \begin{figure}[t]
     \centering
     \begin{boxedminipage}[t]{.25\textwidth}
         \begin{center}
             \textbf{Adversary's view of ShadowNet}
         \end{center}
         \vspace{-1em}
       %  \procedure[codesize=\scriptsize]{}{
         \begin{tabbing}
             \textbf{Input:}\=~$X_1$ 
             %\\[0.1\baselineskip][\hline]
             \\[-0.75\baselineskip]
             \rule{\linewidth}{\arrayrulewidth}\\
             \textbf{View (Normal World):}\\
             \> $\hat{W}_i$, $i \in [k]$\\
             \> $ X'_{i} = Y_{i-1}+M_{i}$, $i \in [2,k]$\\
             %\\\> $I_0 = x, m_0 = input\_mask,$
             %\\\> $u_0 = \boldsymbol{0}, u_1 = input\_unmask$\\
             %\hdashline\\
             \rule{\linewidth}{\arrayrulewidth}\\
             \textbf{Output:} $Y_k$, \textit{output of the model}\\
             \rule{\linewidth}{\arrayrulewidth}\\
             \textbf{Goal of Adversary: } \textit{ Find weights of $W_i$, $i \in [k]$}
       %  }
         \end{tabbing}
     \end{boxedminipage}
      \vspace{-4pt}
      \caption[Hello]{\textbf{Formalization of the adversary's goal.} Let $\mathcal{M}(\cdot)$ denote a CNN with $k$ convolutional layers. Given input $X_1$ (input to the first layer), 
      the model's output is $Y_k$ (output of the last layer). The input of $i$-th layer is the output of the previous layer\footnotemark, i.e., $ X_i=Y_{i-1}, i \in [2,k]$. For ShadowNet, $X'_{i}$ denotes the masked input for the $i$-th layer which is given $Y_{i-1}+M_i$.
      For each convolutional layer, the adversary can observe the masked input $X'_i$ and the  transformed filter $\hat{W}_i$.
     The adversary's goal is to find the original weights of $W_i$.}
      \label{fig:attack} \vspace{-5mm}
  \end{figure}

%\arc{Why we need it this evaluation? Why we chose these two attacks?}
In this section, we present the formal\footnotetext{We are ignoring the batch normalization and ReLU layers for simplicity and a worst case analysis for security. In practice, the adversary can only observe $\hat{X}_{i}=G(Y_{i-1})+M_{i}$ where $G(\cdot)$ represents the non-linear layers. This adds additional complications for the adversary. For instance, negative values cannot be reversed for ReLU activation layers.  .} security analysis of ShadowNet. Let us consider a CNN with $k$ standard convolutional layers where $W_i$ denotes the convolution filter for the $i$-th layer.  Additionally, $X_i$ ($Y_i$) denote the input (output) for the $i$-th layer. For a transformed filter $\hat{W}$, let $\mathcal{F}(\hat{W})$ represent the set of filters that could have been transformed to $\hat{W}$, i.e., the set of possible pre-images for $\hat{W}$. We call this the \textit{feasible set} for $\hat{W}$. The exact construction of $\mathcal{F}(\hat{W})$ is detailed in App. \ref{app:sec_ana}.  The view of the adversary is equivalent to that of the Normal World and is illustrated in Fig. \ref{fig:attack}.
\vspace{-2mm}\begin{theorem} For a CNN with $k$ convolutional layers and a given view of the normal world\footnote{For simplicity and ease  of exposition, we assume that the TEE is perfectly secure, i.e., it acts as a trusted third party, and that it has access to a true random number generator. In practice, we would have to use a pseudorandom number generator (\textsf{PRNG}) with some security parameter  $\kappa$. We can account for this by assuming a computationally bounded adversary and considering an additional $\textsf{negl}(\kappa)$ term in the Eq. \eqref{eq:theorem} where $\textsf{negl}(\cdot)$ is a negligible function.}  $\textsf{View}_{\textsf{Normal}}=\big(X_1,Y_k,\hat{W}_1,\cdots,\hat{W}_k,X'_2,\cdots,X'_k\big)$, we have:\vspace{-2mm} \begin{gather}\small\forall i \in [k] , \forall (W',W'') \in \mathcal{F}(\hat{W_i}) \times \mathcal{F}(\hat{W_i})\\\mathrm{Pr}\big[W_i=W'|\textsf{View}_{\textsf{Normal}}\big]=\mathrm{Pr}\big[W_i=W''|\textsf{View}_{\textsf{Normal}}\big] \label{eq:theorem}\end{gather}  \label{thm:1}\vspace{-2mm}\end{theorem}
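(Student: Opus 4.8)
The statement is a ``perfect secrecy''-style claim, and the plan is to reach it through three reductions followed by a measure-preserving bijection on the per-layer transformation randomness. First I would remove the masked inputs $X'_2,\dots,X'_k$ from the conditioning: since $X'_i = X_i + M_i$ with $M_i$ a fresh uniform mask generated inside the TEE, independent of the weights, of $X_i$, and of every other component of $\textsf{View}_{\textsf{Normal}}$, this is exactly the one-time-pad masking of Slalom~\cite{tramer2018slalom}, so in the idealized model $X'_i$ is marginally uniform and statistically independent of the weights; conditioning on it leaves the posterior of each $W_i$ unchanged and it may be dropped. Second, the pair $(X_1,Y_k)$ is precisely the query/answer that the threat model permits to leak; the transformation randomness is generated independently of $(X_1,Y_k)$, so conditioning on $Y_k$ only restricts the set of pre-images to those consistent with the model's input--output behaviour (this consistency is what the feasible set $\mathcal{F}(\hat W_i)$ of Appendix~\ref{app:sec_ana} encodes) without making any surviving pre-image more likely than another. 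Third, the tuple $(F_i,\lambda_i,C_i,\pi_i)$ producing $\hat W_i$ in Equations~\eqref{eq:expand_weight}--\eqref{eq:shuffle_weight} is drawn fresh and mutually independently across layers, so the posterior of $W_i$ depends on the view only through $\hat W_i$, and it suffices to establish the claim for a single layer.

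\textbf{The single-layer core.} Fix a layer with true filter $W=[w_1,\dots,w_n]$ and observed transformed filter $\hat W=[\hat w_1,\dots,\hat w_m]$, $m=\lceil rn\rceil$. A pre-image $\tilde W\in\mathcal{F}(\hat W)$ is described by a ``role assignment'': a size-$n$ subset $D\subseteq[m]$ of kernels declared data-carrying, a map $g:D\to[m]\setminus D$ recording which raw-mask kernel each data kernel absorbed, and nonzero scalars $(\mu_d)_{d\in D}$, so that $\tilde w_d=\mu_d^{-1}(\hat w_d-\hat w_{g(d)})$. I would prove that for any two pre-images $\tilde W',\tilde W''\in\mathcal{F}(\hat W)$ the density $\mathrm{Pr}[\hat W\mid W=\tilde W]$ over $(F,\lambda,C,\pi)$ is identical, one role assignment at a time: once the roles $(D,g)$ and an ordering of the data kernels are fixed, the permutation $\pi$ and index matrix $C$ are pinned down up to internal relabellings that occur in matching numbers on both sides, and then $\lambda$ and $F$ are uniquely back-solved from $\hat W$ and the pre-image. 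The back-solving map is affine with Jacobian of magnitude $\prod_l|\lambda_l|^{-1}$; pairing the scalar tuples of $\tilde W'$ and $\tilde W''$ in the natural way makes the two Jacobians agree, so the induced correspondence between randomness fibres is measure preserving. Summing the equal fibre masses over all admissible role assignments and permutation classes gives $\mathrm{Pr}[\hat W\mid W=\tilde W']=\mathrm{Pr}[\hat W\mid W=\tilde W'']$.

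\textbf{Conclusion.} Combining the equal likelihoods over $\mathcal{F}(\hat W_i)$ with the (standard) assumption of a uniform prior over the weight space --- consistent with the modelling convention that every scalar is drawn uniformly from $[-t,t]$ --- Bayes' rule yields $\mathrm{Pr}[W_i=W'\mid\textsf{View}_{\textsf{Normal}}]=\mathrm{Pr}[W_i=W''\mid\textsf{View}_{\textsf{Normal}}]$ for all $(W',W'')\in\mathcal{F}(\hat W_i)\times\mathcal{F}(\hat W_i)$, that is, Equation~\eqref{eq:theorem}. The two footnoted caveats are folded in last: replacing the true RNG by a \textsf{PRNG} with security parameter $\kappa$ adds a $\textsf{negl}(\kappa)$ slack against a computationally bounded adversary, and dropping the batch-normalization and ReLU layers is a worst case for the defender, since those layers further randomize, and for ReLU irreversibly collapse, the intermediate tensors the adversary observes.

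\textbf{Main obstacle.} The delicate step is the single-layer bijection, for two reasons. First, the feasible set is combinatorially intricate: because the $f'_i$ in Equation~\eqref{eq:expand_weight} are sampled \emph{with repetition} from $F$ (as $m-n$ may be smaller than $n$), the index map need not be injective, so in a candidate pre-image several data kernels may legitimately share one raw-mask kernel while some raw-mask kernels are unused --- enumerating exactly which role assignments are admissible and verifying that the count of internal relabellings is symmetric between $\tilde W'$ and $\tilde W''$ is where the real bookkeeping lies. Second, over $\mathbb{R}$ the hiding is not literally perfect: a fixed weight plus a $\mathrm{Uniform}[-t,t]$ sample is uniform on a shifted interval, not on $[-t,t]$, so the argument is exact only in the idealized unbounded-range (or finite-field) model, and pinning down the sense in which ``$t$ sufficiently large'' suffices for a target precision is the remaining technical point. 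Both are presumably discharged in Appendix~\ref{app:sec_ana}.
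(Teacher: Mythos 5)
Your plan follows the same overall decomposition as the paper's proof in Appendix~\ref{app:sec_ana}: first argue that the masked activations $X'_2,\dots,X'_k$ and the endpoints $(X_1,Y_k)$ reveal nothing about the weights, then reduce to a per-layer claim that all members of $\mathcal{F}(\hat W_i)$ are equally likely given $\hat W_i$, and finish by Bayes. The differences are in how the two halves are discharged. For the masking step, the paper's Lemma~\ref{lemma:1} is an equation-counting argument (the adversary faces $2|Y_i|+|W_i|$ unknowns and an underdetermined linear system), whereas you invoke one-time-pad--style independence of the fresh masks; yours is the cleaner and more standard route, and it makes explicit the independence assumptions the paper uses implicitly. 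For the per-layer step, the paper simply asserts the key equality (its Equation~\eqref{Eq:3}) ``clearly, from our construction of the feasible set,'' with no argument; your measure-preserving-bijection plan over the randomness $(F,\lambda,C,\pi)$ --- matching role assignments, back-solving the fibre, and checking the Jacobians cancel --- is precisely the content that would be needed to actually prove that assertion, so on this point your proposal is strictly more rigorous than what the paper provides. You also correctly flag a gap the paper glosses over: a weight shifted by a $\mathrm{Uniform}[-t,t]$ mask is uniform on a \emph{shifted} interval, so the equality of posteriors is exact only in an idealized unbounded-range (or finite-group) model; the paper hides this behind ``$t$ is sufficiently large and kept secret'' and never quantifies it. In short, your route is compatible with the paper's but fills in the two places where the paper hand-waves; the remaining work is the combinatorial bookkeeping for repeated mask-kernel indices that you already identify.
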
\vspace{-2mm}
\begin{figure}
  \centering
  \includegraphics[width=0.7\linewidth]{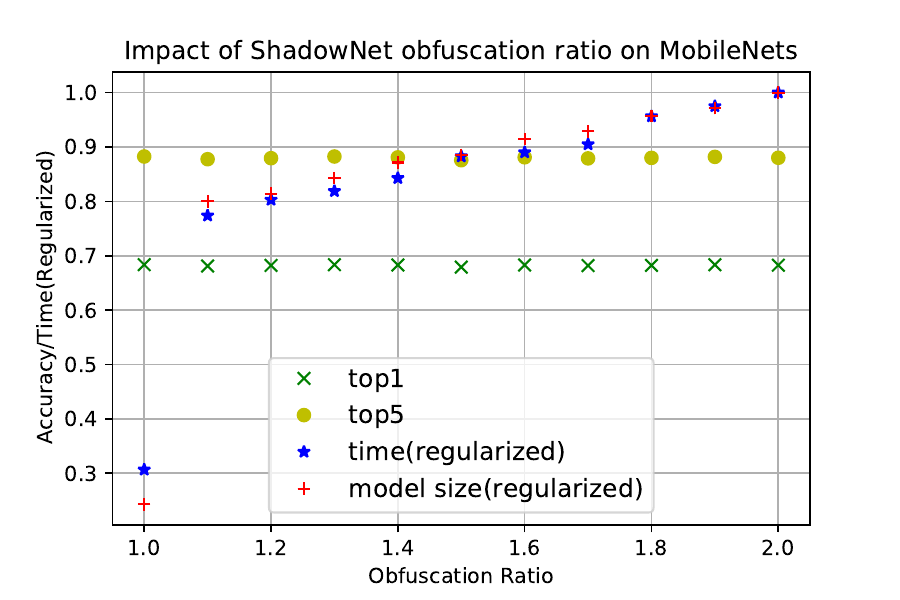}\vspace{-2mm}
    \caption{\textbf{Analysis of performance with varying obfuscation ratio.}
 We measure the change in accuracy (top1 and top5), size (regularized) and
    inference time (regularized) with the obfuscation ratio, $r$, (ranging from 1 to 2) for MobileNet. Here, $r = 1$ refers to the original model and is treated as the baseline.
    }
  \label{fig:obf_ratio} \vspace{-5mm}
\end{figure}

\begin{sketch} Every input/output pair for the intermediate layers $i\in [2,k-1]$ is embedded in $\F$ and masked. As a result, $X'_i$ is indistinguishable from a randomly chosen input of the same shape in $\F$. Hence, $X'_i$s clearly contain no information about the convolution filters. For the first (last) layer, the output (input) is masked which also prevents any reconstruction of the true weights for $W_1$ ($W_k$). The rest of the proof follows trivially from the construction of the feasible sets $\mathcal{F}(W_i)$. The full proof is presented in App. \ref{app:sec_ana}.
\end{sketch}

The above theorem states that, on observing a transformed filter $\hat{W}_i$, corresponding to any convolutional layer $i$, an adversary cannot distinguish between two filters that belong to its \textit{feasible set}. Thus, the feasible sets act as cloaking regions for the original weights. Intuitively, the parameter $r$ is analogous to the security parameter (such as, key size) for generic cryptographic protocols. The larger the value of the obfuscation ratio $r$, the greater is the size of the feasible set and consequently, the better is the security. Concretely, $\mathcal{F}(\hat{W}_a)\supset \mathcal{F}(\hat{W}_b)$ where $m_a=|\hat{W}_a|>|\hat{W}_b|=m_b$ (equivalently, $r_a>r_b$). $r$ is essentially a trade-off between the security guarantee and performance. The exact size of feasible sets can be analytically computed (see App. \ref{app:sec_ana}). Fig. \ref{fig:obf_ratio} shows ShadowNet’s performance with varying values of $r$. Based on this, we set $r=1.2$ for our experimental setup since this was the sweet spot. Specifically, even with $n=16$ (smallest $n$ for our evaluation) and field $\mathbb{F}=\mathbb{Z}_p$
  for $p=2^{24}-3$, the size of the feasible set, $|\mathcal{F}(\hat{W})|$, is of the order of $2^{268}$
  which is sufficient for security. The reason why we get a large feasible set even with a relatively small value of $r$ is that the random permutation matrix $P_\pi$
  and random scalar multiplications by $\lambda_i$s also contribute significantly to the size of the feasible set. %In fact, if we quantize all the inputs and weights to integers, embed them in the field $\mathbb{Z}_p$ for a sufficiently large prime $p$  \cite{tramer2018slalom} and draw our random masks from $\mathbb{Z}_p$, then for $r=2$, the feasibility set for any transformed filter, $\hat{W}$ is essentially the set of all possible filters. This perfectly hides all the true kernels. A detailed discussion is provided in Appendix \ref{app:sec_ana}.

\noindent \textbf{Note.} In the above guarantee, the formal security of the mask layers is rooted in the quantization step. Specifically,  the quantization operation embeds the value in a finite field, $\mathbb{F}$. Subsequently, the masks can be chosen uniformly at random from $\mathbb{F}$ thereby making the mask layer equivalent to one-time-pad encryption \cite{doi:10.1080/0161-119691885040}. %Without quantization, we would have to deal with real numbers (floating points) -- we cannot give any formal security guarantee here since there exists no uniform distribution over reals.}
 
Based on Thm. \ref{thm:1}, we present the following conjecture:\vspace{-2mm}
\begin{conj}Let $q$ be the number of queries required for a  model stealing attack with access to just the querying API, i.e, $Y=\mathcal{M}(X)$\footnote{This corresponds to access to $(X_1,Y_k)$ from Fig. \ref{fig:attack}.} (black-box model stealing attack) and some information about the model architecture (such as, number and type of convolutional layers). Let $q'$ be the number of queries required for an attack on  ShadowNet. We conjecture that $q'$ is of the order of $O(q)$.\label{conj:1}\end{conj}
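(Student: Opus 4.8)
The plan is to establish the conjecture as a two-sided bound $q'=\Theta(q)$, of which the stated $q'=O(q)$ is the easy half and the security-relevant content is the matching lower bound $q'=\Omega(q)$. The upper bound $q'\leqslant q$ is immediate: a ShadowNet adversary controls the whole Normal World and can feed any input $X_1$ through the deployed pipeline and read off $Y_k=\mathcal{M}(X_1)$, so it subsumes a black-box querying adversary, and any black-box model-stealing attack is already an attack on ShadowNet using the same number of queries. The entire difficulty therefore lies in showing that the extra items in $\textsf{View}_{\textsf{Normal}}$ --- the obfuscated filters $\hat{W}_i$ and the masked intermediate inputs $X'_i$ --- cannot cut the query cost of stealing $\mathcal{M}$ by more than a constant factor.

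First I would fix a precise notion of a model-stealing attack, for instance a randomized algorithm that after $q$ input/output queries outputs a hypothesis model functionally $\epsilon$-close to $\mathcal{M}$ with probability at least $1-\delta$, so that $q$ and $q'$ are well defined as the minimal query counts in the black-box and ShadowNet settings at a common $(\epsilon,\delta)$. The core step is then a \emph{simulation lemma}: I would build a black-box adversary $\mathcal{B}$ that runs a given ShadowNet adversary $\mathcal{A}$ as a subroutine, manufacturing a fake Normal-World view while forwarding $\mathcal{A}$'s inputs to the genuine querying API and returning the true outputs $Y_k$. If the simulated transcript is statistically indistinguishable from a real ShadowNet deployment of $\mathcal{M}$ (or computationally so, under the $\textsf{PRNG}$ assumption of the footnote), then $\mathcal{A}$'s success probability and query count transfer to $\mathcal{B}$ up to a $\textsf{negl}(\kappa)$ term, giving $q\leqslant q'+\textsf{negl}(\kappa)$ and hence $q'=\Omega(q)$. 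The masked inputs are the easy part: since each $M_i$ is uniform on $[-t,t]$ with $t$ large and secret, every $X'_i=X_i+M_i$ is statistically close to uniform and independent of the true computation, so $\mathcal{B}$ can sample it directly, and Theorem~\ref{thm:1} already guarantees that no intermediate pair reconstructs a filter.

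The hard part --- and the reason this stays a conjecture rather than a theorem --- is faithfully simulating the obfuscated weights $\hat{W}_i$. Theorem~\ref{thm:1} certifies posterior-uniformity of $W_i$ over its feasible set given the view, which morally says the view leaks nothing usable about $W_i$; but converting this into a sampler that $\mathcal{B}$ can run without knowing the true weights is delicate, because the real marginal law of an obfuscated kernel $\lambda_i w_i+f'_i$ still depends on $w_i$ through the random scaling $\lambda_i$, so $\mathcal{B}$ cannot simply draw $\hat{W}_i$ uniformly. I would approach this in two ways. The first is a statistical-closeness estimate: as $t\to\infty$ the additive random filters dominate and the contribution of $\lambda_i w_i$ is washed out, so the real and simulated laws of $\hat{W}_i$ converge in total variation, and the key technical quantity to pin down is the rate of this convergence as a function of $t$ and the kernel norms. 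The second, should a residual gap remain, is a bootstrapping argument showing that whatever dependence on $w_i$ survives is itself learnable from $O(1)$ extra queries, so that $\mathcal{B}$ spends a constant number of calibration queries before simulating and still preserves the $O(q)$ bound. Closing this gap rigorously, together with fixing the query-complexity model so the reduction is tight in both directions, is the principal obstacle to upgrading the conjecture to a theorem.
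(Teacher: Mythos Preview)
The statement is a \emph{conjecture}, and the paper does not prove it. Its support consists of (i) an informal appeal to Theorem~\ref{thm:1} that the feasible sets $\mathcal{F}(\hat{W}_i)$ are large enough to cloak the true filters, (ii) a parameter-counting illustration showing that an adversary who reuses the transformed filters in an ``equivalent CNN'' must train at least as many free parameters as the original network, and (iii) an empirical experiment in which an adversary initializing from the transformed weights learns no faster than a random baseline. None of this is a query-complexity argument; it is heuristic and experimental evidence that the transformed weights carry no exploitable signal.

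Your route is genuinely different and considerably more ambitious: a cryptographic-style simulation reduction in which a black-box adversary $\mathcal{B}$ runs the ShadowNet adversary $\mathcal{A}$ on a fabricated Normal-World view. You are right that the literal $q'=O(q)$ is trivial ($q'\leqslant q$ since the ShadowNet adversary strictly dominates the black-box one) and that the security content is $q'=\Omega(q)$; the paper's phrasing is loose on this point. You also correctly isolate the real obstruction --- the marginal law of $\hat{W}_i$ depends on $W_i$, so $\mathcal{B}$ cannot sample it blindly --- and your two proposed workarounds (a total-variation bound for large $t$, or $O(1)$ calibration queries) are reasonable, though note an additional wrinkle you do not mention: when $m-n<n$ the masks $f'_i$ are reused across kernels and also appear verbatim among the $\hat{w}_j$, so $\hat{W}_i$ carries structural correlations that a na\"ive per-kernel TV estimate will miss. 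The paper sidesteps all of this in favour of the parameter-count heuristic and the learning-curve experiment; your approach, if the distributional gap could be closed, would upgrade the conjecture to a theorem, which the paper does not attempt.
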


Black-box model stealing attacks can be typically classified into two types -- $(1)$ functionally-equivalent model stealing attacks \cite{jagielski2019high} where the attacker tries to extract the \textit{exact} weights of model by analytically solving a set of linear equations, $(2)$ learning-based model stealing attacks where the adversary tries to learn a shadow model \cite{tramer2016stealing,papernot2017practical,Knockoff,MixMatch}.  The first type of attacks are \textit{impossible} in ShadowNet -- since our mask layers are refreshed every time, it is mathematically impossible to solve for the model weights (Lemma \ref{lemma:1}, App. \ref{app:sec_ana}). Clearly from our threat model (Sec. \ref{sec:threat}), protection against learning-based black-box model stealing attacks is \textit{beyond the scope} of ShadowNet. The implications of Conjecture 1 in this context is that with ShadowNet, an adversary cannot do anything significantly better than a standard learning-based black-box\footnote{with some extra information about the architecture as stated in Conj. \ref{conj:1}} model stealing attack.
Our reasoning is based on the fact the feasible sets for the transformed weights are sufficiently large -- this provides sufficient cloaking region for the original weights. In other words, an adversary \textit{cannot} learn anything useful about the original weights by observing the ShadowNet transformed weights. We provide empirical evidence in support of our conjecture as follows. 

%\zs{As we have evaluated on three types of BB attack, do we still need the Illustration here? Should the experiments results speak for themselves? My concern is that the current Illustration based on some ad-hoc model parameters.}

%defined in Section~\ref{sec:transform}. 
%It can be equal or bigger than 64, namely the number of filters in the original \textit{conv} layer.
%Hence, we conjecture that the adversarygains no advantage  by  training an equivalent CNN with the transformed weights from ShadowNet.}

\point{Empirical Analysis}
Table \ref{tab:BB} shows the empirical evaluation of our conjecture based on two black-box attacks, namely, Knockoff attack ~\cite{Knockoff} and MixMatch~\cite{MixMatch,jagielski2019high}, on a victim model of a four layer fully-connected CNN trained on CIFAR-10 (Fig. \ref{fig:example-cnn}).  Knockoff is the state-of-the-art black-box attack with an \textit{adaptive} query strategy.  A recent survey~\cite{pow2022iclr} shows that MixMatch is currently the state-of-the-art attack with the \textit{highest} attack accuracy. Knockoff queries the victim model from an out-of-distribution dataset. For our evaluation,  Knockoff uses 10K queries from CIFAR-100~\cite{cifar100} selected via an adaptive strategy based on reinforcement learning. MixMatch is based on semi-supervised learning; the attack samples 8K points from CIFAR-10 for querying the victim model and uses an additional 8K unlabeled points from CIFAR-10 for its semi-supervised training. 
 For Knockoff, the victim model is trained on CIFAR-10 with accuracy 81.6\% (column 1 in Table \ref{tab:BB}). For MixMatch, the victim model is trained on CIFAR-10 with the MixMatch semi-supervised training approach and has accuracy 98.1\% (column 1). We used different training strategies for the victim model for the two attacks in order to be consistent with the configurations of their respective original papers~\cite{Pow_repo}. 
The baseline shadow model (column 2) for  Knockoff and MixMatch is and ResNet-18 and Wide-ResNet-28, respectively.

In order to assess whether the adversary benefits from knowing the transformed weights ($\hat{W}$) in ShadowNet, $(1)$ we create a custom adversary model with the 
same architecture as that of the victim model, $(2)$ we copy weights
$\hat{W}$ to the adversary’s model, $(3)$  we mark the convolutional layers with weights $\hat{W}$ as non-trainable and 
train only the other layers (results in column 3). The resulting adversary model is depicted in Fig. \ref{fig:attacker_net} in App. \ref{app:sec_ana}.
 We observe that the ShadowNet adapted attack  is less accurate than the baseline attack (column 2) after the same number of queries. In fact, it performs worse than a random baseline (column 4) where the weights of the non-trainable convolutional layers (as described above) are randomly assigned. This shows that the ShadowNet transformed weights contain no useful information about the original weights and the adversary gains \textit{no advantage} by reusing the transformed weights.   
%This backs up our conjecture that ShadowNet does not give the adversary any advantage even with a black-box attack.
\begin{table}[!ht] \vspace{-2mm}
\centering
\caption{Empirical evaluation of black-box attack}
\label{tab:BB}
\resizebox{\columnwidth}{!}{
\begin{tabular}{c|c|c|c|c}\hline
& Victim & Attack Baseline & ShadowNet & Random Baseline  \\\hline
% we don't have consistency data for MixMatch
%\new{Consistency}&\new{40.1\%} & \new{33.1\%}& \new{38.4\%} \\
Knockoff~\cite{Knockoff} & 81.6\%& 36.6\% & 31.1\% & 36.2\%  \\
MixMatch~\cite{MixMatch,jagielski2019high} & 98.1\% & 95.8\% & 92.3\% & 92.4\%  \\
\hline
\end{tabular}}
\footnotesize
    {  Column 1 reports the underlying victim model's accuracy. Column 2 corresponds to the attack baseline where the adversary has no knowledge of the victim model. Column 3 corresponds to the attack setting adapted for ShadowNet where the adversary has access to the victim model's transformed weights. Column 4 corresponds to the case where the convolutional layers of the attacker's model have random weights. }
    %In order to assess whether the adversary benefits from knowing the transformed weights (conv') in ShadowNet, (1) we create a customized adversary model with the same architecture as that of the victim model, (2) we copy theconv’ layers to the adversary’s model, (3)  we mark these conv’ layers as non-trainable and onlytrain the other layers (results in column 2). For the Random Baseline attack,  the weights of the non-trainable convolutional layers are randomly initialized. } 
    \vspace{-4mm}
    \end{table}

\section{Discussion}\label{sec:discussion}
%\vspace{-2mm}
\point{Running unmodified model inside the TEE} There are several challenges in running the unmodified model inside the TEE. First, standard model inference (outside the TEE) is memory intensive relative to the original model size. For instance, the original model size for MiniVGG and ResNet-44 is 20MB and 2.4MB, respectively. However, running the unmodified model inference requires dynamically allocating at least 44MB (2.2X) and 10MB (4.2X) for MiniVGG and ResNet-44, respectively. This is because running convolutional layers for standard model inference requires reshaping a 3D weight matrix into 2D matrices – this needs large chunks of intermediate dynamic memory allocation to store the extra copy of the resized weights. In contrast, ShadowNet only requires 4MB (0.2X) and 5MB (2.1X) TEE memory for MiniVGG and ResNet-44, respectively. ShadowNet reduces the memory footprint by keeping the memory-intensive linear layers outside the TEE. 
ShadowNet’s memory requirement for MobileNets (original model size 17.2MB) is relatively high (48MB -2.7X) since the MobileNets architecture introduces special CNN layers such as depthwise and pointwise convolution which increases the size of the input and output for each layer. As a result, the mask layer added by ShadowNet consumes a relatively larger amount of TEE memory. Nevertheless, it is still significantly lower than running the unmodified model inside TEE (61MB-3.5X).
%For instance, convolution operations are expensive as they require resizing of the dimensions of the matrices. Additionally, buffers are allocated dynamically to store the intermediate results for the output of each layer. Our memory profiling shows that running the unmodified model inference requires dynamically allocating at least 44MB for MiniVGG, 10MB for ResNet-44, 61MB for MobileNet, 54MB for ResNet-404 and 72MB for YOLOv4-tiny. Also note that the secure OS inside the TEE reserves $>$4MB memory for its usage. Hence given the TEE memory limit of $64$ MB for our evaluation board, it is not feasible to run the entire model inference task inside the TEE. ShadowNet, on the other hand, only requires $4$MB for MiniVGG, $5$MB for ResNet-44 and $48$MB for MobileNet,  $34$MB for ResNet-404, $52$MB for YOLOv4-tiny, thereby requiring a much smaller secure memory size. 
Second, running unmodified models inside the TEE would require a significant engineering effort. TensorFlow Lite needs to be ported into the TEE to support the convolutional layers and matrix operations which rely on certain mathematical libraries (in C++) for efficiency. This entails a significant engineering effort since currently Arm OP-TEE OS does not support C++ and its associated computing libraries.  Third, the limited size of the TCB presents additional challenges. TensorFlow Lite has tens of thousands lines of code and porting it as a whole inside TEE would require a much larger TCB. On the other hand, recall that ShadowNet can work with a small TCB -- its TA adds only 2100 LOC inside the TEE (the new operations add $<200$ LOC inside the TCB). Additionally, note that the extra code added for model conversion is used offline by the model vendor and is not deployed on the device. Hence, this does not increase the on-device TCB size. Last, we will lose access to hardware accelerators if we run the unmodified model inside the TEE.

  \begin{figure}
   \centering
   \includegraphics[width=0.6\linewidth]{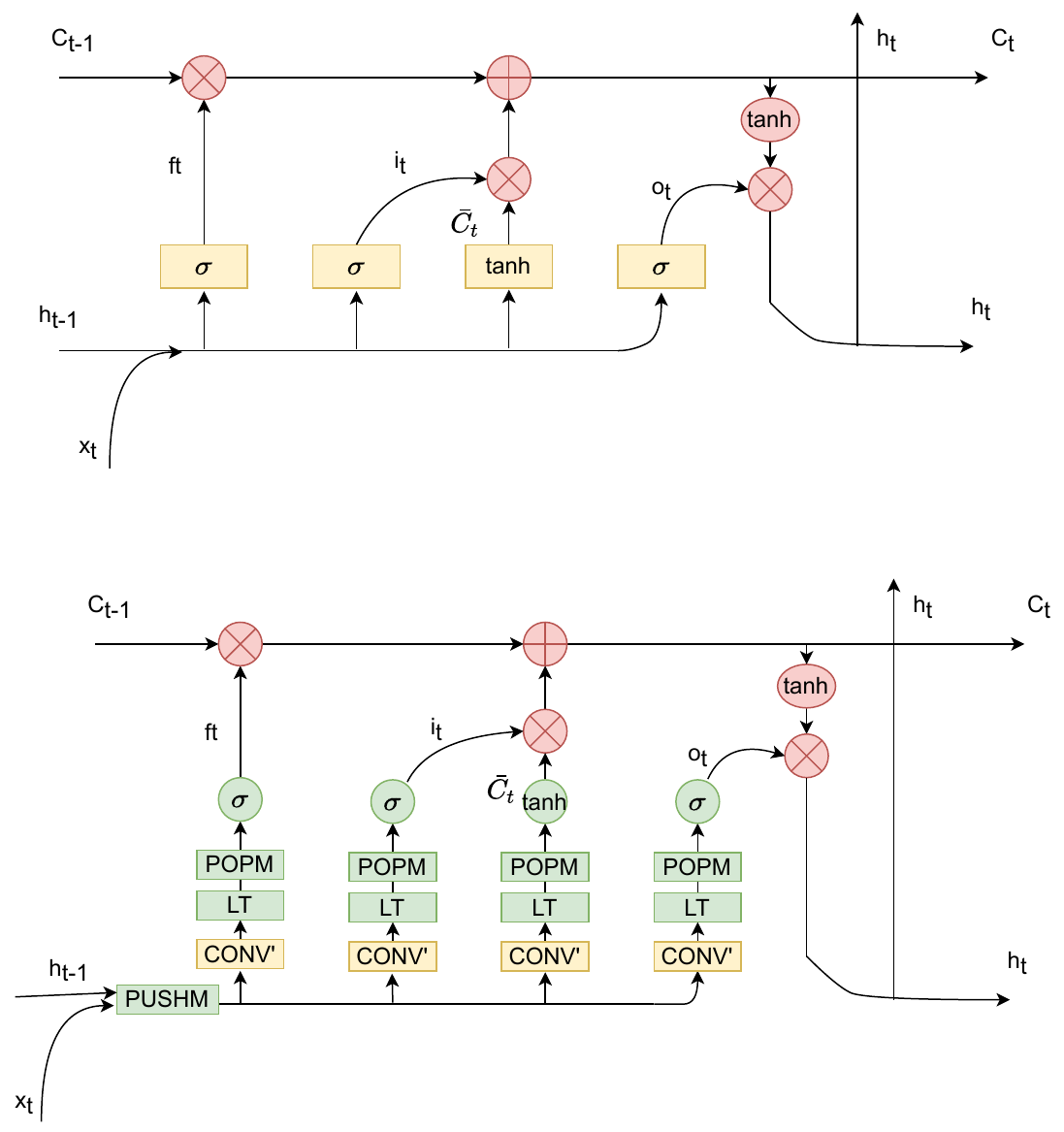}\vspace{-2mm}
     \caption{\textbf{Proof-of-concept transformation on LSTM}. The original LSTM layer is showed
     on the top and the ShadowNet transformed LSTM is depicted at the bottom. The red circles represent
     pointwise matrix operations and the yellow square boxes marked with activation symbols represent
     fully connected layers followed by the corresponding activation layers.  Green boxes marked ``LT", ``PUSHM" and ``POPM" represent  the ``LinearTransform", ``PushMask" and ``PopMask" layer of ShadowNet, respectively. The yellow square boxes marked with ``CONV$'$" represent the ShadowNet transformed convolutional layers
     which are outsourced to the untrusted world. 
     All the other parts of the LSTM layer including the internal states remain in the secure world.
     }
   \label{fig:lstm}\vspace{-6mm}
 \end{figure}
\textbf{Support for CNNs and LSTMs.} %Our evaluation (Sec. \ref{sec:eval}) covers a diverse range of CNN architectures. %Following the same transformation rules as showcased for , ShadowNet can be applied on InceptionNet, DenseNet and so on.  
In general, ShadowNet can be applied to any CNN model as long as the TEE can support the memory requirements of the corresponding ShadowNet transformed model. The amount of TEE memory required for running the ShadowNet transformed model can be estimated by the size of the weights that need to be stored inside the TEE (\textit{Other} in Table \ref{tab:sd-size}). Concretely, the total size is given by the sum of the size of the weights of four ShadowNet operations, namely, \textit{LinearTransform}, \textit{ShuffleChannel}, \textit{AddMask} and \textit{Batchnorm}.  The size of \textit{LinearTransform} and \textit{ShuffleChannel} can be computed from the shapes of the convolutional layer and the obfuscation ratio. The size of \textit{AddMask} can be computed from the input and output shape of each layer. The size of \textit{Batchnorm} can be computed directly from the model. %As shown in Table \ref{tab:sd-size}, even though the original model can be big, the weights need to be stored inside TEE can be small (e.g., MiniVGG) or big (e.g., MobileNets). 
%Note that the actual run-time memory allocation required by ShadowNet is lesser due to our optimized memory allocation (Sec. \ref{sec:ca-ta}). }
%The CA/TA might be a little different. For example, for networks with shortcuts,
%like DenseNet and , the TA will need more TEE memory to keep the previous 
%layers' output as input for the following layers.
\\ShadowNet can also be used for LSTMs. LSTMs typically consist of fully-connected layers, pointwise matrix operations and activation layers. Recall that a fully connected layer can
     be treated as a convolutional layer during the ShadowNet transformation. Hence for LSTMs, ShadowNet applies linear transformation on the fully connected layers while keeping the pointwise matrix operations and activation layers inside the TEE. As a concrete example, a LSTM layer (top)  and its corresponding ShadowNet transformation (bottom) is depicted in Fig. \ref{fig:lstm}. We implement a prototype ShadowNet transformation for the LSTM layer depicted in Fig. \ref{fig:lstm} with an input ($x_t$) shape  of $(12, 30)$, output space/units set to be $10$, obfuscation ration $r=1.2$ and $1640$ parameters.  %The ShadowNet transformed layer has the same input and output shape as the original LSTM layer and can seamlessly replace it. 
    The number of parameters after the ShadowNet transformation is $3640$ -- the extra parameters are due to the mask layers and the transformed CONV$'$ layers.
     %We also performed additional experiments on the performance of the LSTM model depicted in Fig. 10.
%LSTM Model details:
%Number of features = 12
%Number of dimensions = 30
%Number of time steps = 10
%Obfuscation Ratio = 1.2
Original LSTM model inference time is 33 ms while the
Shadownet transformed LSTM model inference time is 41ms (1.24X).
%If the weights are changing dynamically, then the transformation can not be performed off-line. Hence, we need to move the pointwise operations on cell states to the TEE as it changes over time.
\\\textbf{Support for Cloud Platform.}
ShadowNet  can be used for secure model inference in the cloud as well. For this, 
the ShadowNet CA/TA needs to be changed to support a cloud TEE, such as SGX. 
Compared with other designs that perform model inference inside SGX~\cite{kunkel2019tensorscone,tftrusted}, we expect ShadowNet to be more 
efficient in using the SGX memory and to benefit from the co-located GPUs for acceleration.
\\\textbf{Layerwise ShadowNet.}
ShadowNet transformations can be applied on each convolutional layer independently. Hence, an alternative strategy for implementation is to selectively apply ShadowNet to only the sensitive layers which would improve performance. The rationale behind Layerwise ShadowNet is supported by research on transferable learning~\cite{yosinski2014transferable} 
which shows that the bottom layers contain features that are more specific to the training dataset. Hence, these features are more sensitive than the generalized features in the top layers.
\section{Related Work}\label{sec:related_work}
%\vspace{-2mm}
%In this section, we give an overview of the related prior work.
%With on-device machine learning(ML) becoming a new trend\cite{xu2019first}, 
%thousands of proprietary ML models stored on untrusted mobile devices are at risk of leakage\cite{sun2020mind}.
Existing research on secure ML covers both end devices and cloud-based solutions.
Offline Model Guard (OMG) \cite{bayerl2020offline} provides a secure model inference framework for mobile devices
based on SANCTUARY~\cite{brasser2019sanctuary}, a user space enclave built on Arm TrustZone. However, the original paper presents only a proof-of-concept implementation for OMG that cannot be directly integrated with existing mobile apps (unlike ShadowNet). %Additionally, it is evaluated only on a single voice recognition model.
Moreover, OMG is based on the Sanctuary enclave~\cite{brasser2019sanctuary} which runs a user application along with the OS in an isolated environment. For model inference, the unmodified model is run inside the Sanctuary enclave. Thus, the threat model is different as OMG requires the whole software stack, including the OS, libraries and the application inside Sanctuary, to be trusted. ShadowNet, on the other hand, relies only on a secure TEE-OS and TA which is a more relaxed trust assumption.
%OMG allows the model inference framework to run fully inside the SANCTUARY enclave to protect the model privacy.
MLCapsule \cite{hanzlik2018mlcapsule} also deploys the model on the client side to protect the user input
from being sent to the untrusted cloud. Additionally, it runs the model inference inside SGX
to prevent the model from being leaked to the client.
DarkneTZ~\cite{darknetz} is a secure machine learning framework built on top of Arm TrustZone. 
It allows a few selected layers to run inside the TEE to protect part of the model.
OMG, MLCapsule and DarknetTZ do not support secure GPU acceleration and have a larger TCB size than ShadowNet.
%ShadowNet has a comparatively smaller TCB size and allows secure outsourcing of the linear layers to the GPU.
Graviton \cite{volos2018graviton} proposes TEE extension for GPU hardware, thus allowing GPU tasks
to run securely.
%Graviton builds encrypted channels for CPU’s TEE (like Intel SGX or Arm TrustZone) 
%to talk to GPU, and enforces memory mapping check for GPU
%commands so that different GPU tasks are isolated. 
 However, it requires hardware changes to the GPU.
Secloak~\cite{lentz2018secloak} partitions the GPU into Secure World to run GPU tasks securely at a high performance
penalty. ShadowNet does not change the GPU hardware or partition the GPU into the Secure World.

%Research on secure ML in the cloud is an active area.
TensorSCONE~\cite{kunkel2019tensorscone} proposes a secure ML framework that runs on the untrusted cloud. %TensorSCONEintegrates TensorFlow with the secure Linux container technology, SCONE~\cite{arnautov2016scone}, guarded by SGX.  
However, it is only evaluated on Inceptionv4 and has a 3.1X time overhead with 330MB memory consumption -- these overheads are higher than that of ShadowNet.
Additionally, it is designed for a cloud environment while ShadowNet is aimed at a mobile environment which is more resource constrained. 
TF Trusted \cite{tftrusted} leverages custom operations to send gRPC messages to the 
Intel SGX device via Google Asylo~\cite{asylo}.% where the model is then run by Tensorflow Lite. 
%Occlumency \cite{lee2019occlumency} provides a suite of heuristic techniques based on Caffe and improves inference speed by 3.6 times.
%Running model inference inside TEE faces challenges due to limited memory and lack of GPU acceleration.
This requires more TEE resources than ShadowNet and does not support GPU acceleration.

Slalom \cite{tramer2018slalom} outsources the
linear layers to the GPU for acceleration with masked inputs while keeping the other layers inside SGX.
%It verifies the linear layer's results and computes the nonlinear layers inside the SGX. 
 Slalom protects the user input privacy but not the model weights\footnote{Slalom outlines a conceptual way for protecting the model privacy from clients -- however, no concrete implementation and evaluation is provided.} from the 
untrusted server while ShadowNet protects the model weights.
YerbaBuena \cite{gu2018securing} partitions the model into frontnets %(such as the first layer) 
and backnets, and executes the frontnets
inside SGX. This protects the input from the cloud.
SecureNets~\cite{chen2018securenets} transforms both the input and the linear layer into matrices
and applies matrix transformations~\cite{salinas2016efficient}
before sending them to the cloud. 
It is unclear whether SecureNets supports depthwise convolution and convolution with stride.
ShadowNet does not require transforming input and weights into matrices and is compatible with existing linear operations.

Some secure ML systems use cryptographic primitives, such as CryptoNets~\cite{gilad2016cryptonets}, \cite{jiang2018secure}, TF
Encrypted \cite{tfencrypted} and SafetyNets~\cite{ghodsi2017safetynets}.
ShadowNet's performance is orders of magnitude better than such cryptographic approaches. For instance, for a single image classification, CryptoNets takes $\sim$570s s on a PC while ShadowNet takes $<$1s on a smartphone.

%CryptoNets~\cite{gilad2016cryptonets} applies HE on neural networks and runs model inference on the encrypted data to protect user input privacy.Jiang et. al \cite{jiang2018secure} presents a solution for encrypting a matrix using HE to protect both the user data and the model. TF Encrypted \cite{tfencrypted} enables training and prediction over encrypted data via MPC and HE. SafetyNets~\cite{ghodsi2017safetynets} designs an interactive protocol that allows clients to verifythe correctness of a class of DNNs running on the untrusted cloud. 

%To ensure the integrity of model weights, Uchida et. al \cite{uchida2017embedding} and Zhang et. al \cite{zhang2018protecting} embed watermarks into deep neural model parameters while training the models.
%DeepAttest \cite{chen2019deepattest} encodes fingerprint in DNN weights to prevent weight modification. These works do not prevent weight leakage.

\vspace{-0.5mm}\section{Conclusion}
In this paper, we have proposed ShadowNet, a secure on-device model inference system for CNNs
that protects the model privacy with a TEE while leveraging the untrusted 
hardware for acceleration. %The key idea of ShadowNet is to  apply linear transformation on the weights of the linear layers and outsource them to the untrusted world. %In this way, ShadowNet leverages the hardware accelerators without trusting them and restores the results inside the TEE. 
We have implemented an end-to-end prototype of ShadowNet on TensorFlow Lite
and OP-TEE with optimizations to work with a small TCB. %We also provide automated model conversion tool to easy adoption.% the developer's effort in adopting ShadowNet system. %Our evaluation on three popular CNNs, namely, MobileNets, ResNet-44 and MiniVGG demonstrates ShadowNet's practical feasibility on real-world datasets.  

\section*{Acknowledgement}
The authors would like to thank Matthew Jagielski for the insightful feedback on our security analysis of black-box model extraction attack and the anonymous reviewers for their constructive
comments. This work was supported by the National Science Foundation under Grant No. 2031390. 
%-------------------------------------------------------------------------------
\bibliographystyle{plain}
\bibliography{citations,relatedwork}
\clearpage
%\clearpage
\begin{appendices}

\section{Optimizing the \text{sqrt} function} 
    \label{sec:sqrt}

\begin{table}[ht]
\centering
\caption{Performance of different \textit{sqrt} implementation.}
\label{tab:sqrt}
\begin{tabular}{c|c|c|c|c}
\hline
    \textit{Sqrt} Impl. & Time(ms) & S/H & Algorithm & CFLAG\\
\hline
    GNU libc & 3.53  & S& IEEE754 & Default\\
    Newlib & 13.78  & S& IEEE754 & \textit{-O2} \\
    Our TA & 194.04  & S& Newton & \textit{-Os}\\
    Arm VFP & 10.86 & H& unknown & Default\\
    Arm Neon & 6.62 & H& Newton& Default\\
\hline
\end{tabular}
\footnotesize
{\\
\raggedright \textit{Note}: 
    a. S/H: \textit{S} means Software based implementation, \textit{H} means Hardware based implementation, like special instructions;
    b. CFLAG: GCC compilation flag; c. IEEE754 means algorithm exploits bits hacking of IEEE754 float format; d. Newton means Newton 
    Iteration for sqrt.}
\end{table}

%We highly optimize functions that incur big overhead. One example is Batchnorm layer's 
%\textit{sqrt} function. 
%We will use it as an example to show that finding an efficient implementation
%for TA is non-trivial. 
There are many different implementations of the \textit{sqrt} function for floating point numbers
for AArch64 architecture.
   % including 
%software-based and hardware-based implementations.
Software-based implementations include algorithms using Newton iteration and 
bits hacking of IEEE754 float representation. Hardware-based implementations include 
Arm VFP support for \textit{fsqrt}, and Arm Neon support for float \textit{sqrt}.
Additionally, the performance of software-based implementations is also affected by the compilation flag. The default
gcc compilation flag for TA is \textit{-Os}, which optimizes space first; if we change it to \textit{-O2}, the
performance is more than 100x faster while the TA size increases from 55KB to 67KB.
We evaluate all the above implementations by doing 3,200,000 sqrt operations and show the results in Table~\ref{tab:sqrt}. 
Our TA initially used a software implementation using Newton Iteration algorithm. After evaluation, 
we switched to the Arm Neon based \textit{sqrt} implementation for the speed and ease of implementation.
%Although it is the second fastest, it is easier to port to TEE than GNU libc's implementation.
% Ruimin: {add the conclusion of the results, such as We chose xxx for its fastest computaion.}

%\section{Handling shortcuts in CNNs}
%     \label{sec:resnet}

%\section{Adversary's Equivalent CNN}
%     \label{sec:attack-net}

\section{Security Analysis}\label{app:sec_ana}

\begin{figure*}[ht]
   \centering
   \includegraphics[width=0.70\linewidth]{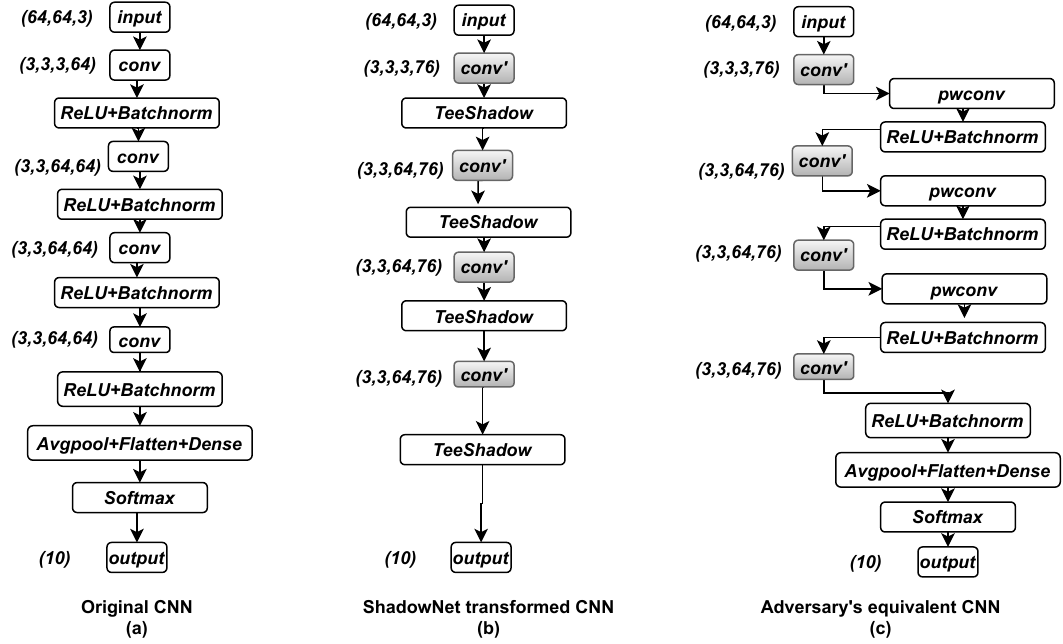}
     \caption{Figure (a) shows the original CNN model. Figure (b) is the ShadowNet transformed model. Figure (c) the adversary's equivalent CNN. Before training, its \textit{conv'} layers are initialized with weights copied from the corresponding \textit{conv'} layers of the ShadowNet transformed model. These \textit{conv'} layers are then reused by the adversary and marked as non-trainable during training. }
   \label{fig:attacker_net}
 \end{figure*}
\point{Construction of Feasible Set}
We refer to the kernels $f_i$ in the random filter $F=[f_1,\cdots,f_{m-n}]$ as mask kernels. Additionally, let $\in_R$ represent a uniform random sampling. In what follows, we outline the methodology to compute the feasible set, $\mathcal{F}(\hat{W})$ for a  given transformed weight matrix $\hat{W}$. The idea is to back-trace and compute the set of possible pre-images.
Now, the feasible set is constructed as follows: \begin{enumerate}
    \item Select the set of $m-n$ indices uniformly at random:  \begin{gather} \Omega \subset_R [m], |\Omega|=m-n \end{gather}
    $\Omega$ represents a possible set of indices that correspond to the mask kernels. 
    \item The corresponding set of mask kernels is: 
    \begin{gather} \Phi_\Omega=\{\hat{w}_i| i \in \Omega\} \end{gather} 
    \item Let $\overline{\Phi}_{\Omega}=\{\hat{w}_i|i \in [n]\setminus \Omega\}$ be the set of  transformed original kernels. Additionally, let $\overline{W}=[\overline{w}_1,\cdots,\overline{w}_n]$ where $\overline{w}_i \in \overline{\Phi}_{\Omega}$ and $ \overline{w}_i\neq \overline{w}_j, i,j \in [n], i \neq j$.
    \item Sample a random permutation $\sigma \in_R S_n$. We assume that $\sigma=\pi[1:n]^{-1}$, i.e,, $\sigma$ reverses the effect of $\pi$ on the transformed original kernels. Thus, $\overline{W}_{\sigma}=[\overline{w}_{\sigma(1)},\cdots, \overline{w}_{\sigma(n)}]$ represents a possible transformed filter.
    \item Compute
    \begin{gather*}\forall i \in [n]\\ \mathcal{F}^i_{\Omega,\sigma}(\hat{W}) =\{w|w=d\cdot (\overline{w}_{\sigma(i)}-\hat{w}'), d \in_R \F, \hat{w}'\in_R \Phi_\Omega \}\end{gather*}
    $\mathcal{F}^i_{\Omega,\sigma}(\hat{W})$ represents the set of possible values for the kernel $w_i$ for the given  $\Omega$ and $\sigma$.
    \item Compute 
    \begin{gather*}\mathcal{F}_{\Omega,\sigma}(\hat{W})=\big\{\big[w_1,\cdots,w_n\big]\big |\forall i \in [n], w_i \in \mathcal{F}_{\Omega,\sigma}^i(\hat{W})\big\} \big\}\end{gather*}
    $\mathcal{F}_{\Omega,\sigma}(\hat{W})$ denotes the set of possible filters $W$ for the given $\Omega$ and $\sigma$.
    
    \item Clearly, we have 
    \begin{gather}\mathcal{F}(\hat{W})=\bigcup_{\Omega}\bigcup_{\sigma}\mathcal{F}_{\Omega,\sigma}(\hat{W})\numberthis \label{Eq:1}\end{gather}
\end{enumerate} 
Clearly, larger the value of $r$, greater is the size of $\Omega$ and consequently, $\mathcal{F}(\hat{W})$. Additionally, it is evident that $\mathcal{F}(\hat{W}_a)\supset \mathcal{F}(\hat{W}_b)$ where $m_a=|\hat{W}_a|>|\hat{W}_b|=m_b$ (equivalently, $r_a>r_b$).
For depthwise convolutional layer, we have \begin{gather}\mathcal{F}(\hat{W})=\{ [d_1\cdot \hat{w}_{\sigma(1)},\cdots, d_n\cdot \hat{w}_{\sigma(n)} ]\big|\sigma \in_R S_n, \forall i \in [n]~d_i \in_R \F\}\label{Eq:2}\end{gather}

\textbf{Theorem 1.}\textit{ For a CNN with $k$ convolutional layers and a given view of the normal world} $\textsf{View}_{\textsf{Normal}}=\big(X_1,Y_k,\hat{W}_1,\cdots,\hat{W}_k,X'_2,\cdots,X'_k\big)$\textit{, we have:} \begin{gather*}\forall i \in [k] , \forall (W',W'') \in \mathcal{F}(\hat{W_i}) \times \mathcal{F}(\hat{W_i})\\\mathrm{Pr}\big[W_i=W'|\textsf{View}_{\textsf{Normal}}\big]=\mathrm{Pr}\big[W_i=W''|\textsf{View}_{\textsf{Normal}}\big]\numberthis \label{Eq:4} \end{gather*} 
\\\\
\begin{proof} First, we present two helper lemmas as follows.
\begin{lemma}$X'_i, i \in [k]$ is indistinguishable from a random tensor in $\F$ with the same shape as $X_i$.\end{lemma}
\begin{proof}Note that $X_i$s are embedded in a field $\F$. Thus clearly, masking the inputs $X'_i=X_i+M$ is equivalent to applying a one-time pad which concludes our proof.\end{proof}
\begin{lemma}Adversary cannot reconstruct $W_i, i \in [n]$ from  $(X_1,Y_k)$.  \label{lemma:1}\end{lemma}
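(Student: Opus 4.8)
The plan is to establish Lemma~\ref{lemma:1} by showing that the masked activations $\hat X_2,\ldots,\hat X_k$ carry no information about the convolution filters, so that the adversary's ability to pin down any $W_i$ is no better than under plain black-box access to the single pair $(X_1,Y_k)$ --- which, for one query, already leaves the weights wildly underdetermined (it is $|Y_k|$ real equations in $\sum_i |W_i|$ unknowns). First I would invoke the one-time-pad structure of the masks: each $M_i$ is sampled freshly and entrywise uniformly from $[-t,t]$, independently across layers and across inference runs, and is used exactly once, so conditioned on the weights (and hence on the true intermediate activations) the observed $\hat X_i = G(Y_{i-1})+M_i$ is an entrywise-uniform translate of $[-t,t]$. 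Treating this as an ideal one-time pad --- the content of the paper's ``$t$ sufficiently large'' assumption, which one can make literal by carrying the argument out over a large finite ring or in the limit $t\to\infty$ --- $\hat X_i$ is then distributed independently of $G(Y_{i-1})$ and therefore of $W_1,\ldots,W_{i-1}$.

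Next I would split into the three regimes that the sketch of Theorem~\ref{thm:1} alludes to. For an interior layer $i\in[2,k-1]$ both its input and its output are hidden, the adversary seeing only $\hat X_i$ and $\hat X_{i+1}$, each with its own independent mask, so no relation isolating $W_i$ survives. For the first layer the adversary knows $X_1$ but the matching output appears only inside $\hat X_2 = G(Y_1)+M_2$, so the lone equation tying $X_1$, $W_1$ and $Y_1$ is spoiled by an independent uniform unknown; symmetrically, for the last layer $Y_k$ is known but $X_k$ is visible only inside $\hat X_k$. In each case the system the adversary could try to solve for $W_i$ acquires at least $|X_i|$ fresh unknowns (the mask entries) for whatever constraints it gains, hence stays underdetermined, and combined with the independence above this yields $\Pr[W_i = W \mid X_1,\hat X_2,\ldots,\hat X_k,Y_k] = \Pr[W_i = W \mid X_1,Y_k]$ for every candidate $W$.

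To turn this into a clean argument I would phrase it as a simulation claim: fix the adversary's masked view; for any weight configuration consistent with $(X_1,Y_k)$, the masks can be reassigned by $M_i := \hat X_i - G(Y_{i-1})$, where $Y_{i-1}$ is the true activation under that configuration, and this reassignment lies in the support of the mask distribution and is measure-preserving. Hence every such configuration is equally likely given the full masked view, which in particular forbids unique reconstruction of any $W_i$ and establishes the lemma. The full Theorem~\ref{thm:1} then follows by composing the lemma with the feasible-set construction (Equations~\eqref{Eq:1} and \eqref{Eq:2}), which supplies the matching uniform-posterior property for the remaining observables $\hat W_1,\ldots,\hat W_k$.

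The step I expect to be the real obstacle is the real-valued masking itself: a uniform mask on the bounded interval $[-t,t]$ is not a perfect one-time pad, since the support of $X_i + M_i$ leaks the approximate position of $X_i$ near the interval's edges, so the ``independence'' and ``measure-preserving reassignment'' statements are only exact in an idealized model. Making this precise --- bounding the resulting statistical distance in terms of $t$ and the activation magnitudes, or re-deriving everything over $\mathbb{Z}_q$ for large $q$ --- is the delicate part; the three-regime case analysis and the composition with the feasible sets are routine once the one-time-pad abstraction is granted.
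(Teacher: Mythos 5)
Your proposal is correct and follows the same skeleton as the paper's proof --- the same three-regime case split (first layer: output masked; last layer: input masked; interior layers: both masked), and the same reliance on fresh per-layer, per-inference masks --- but the engine of your argument is genuinely different. The paper argues purely by counting: for, say, the first layer the adversary sees $X_1$ and $\hat{X}_2=Y_1+M_2$, which introduces $2|Y_1|+|W_1|$ unknowns against only $|Y_1|$ equations, so the linear system for $W_1$ is underdetermined; the intermediate and last layers are dispatched the same way, and the note that in practice the adversary only sees $\hat{X}_i=G(Y_{i-1})+M_i$ with nonlinear $G$ is treated as a bonus complication. You instead promote the masks to an (idealized) one-time pad and give a simulation argument: reassigning $M_i:=\hat{X}_i-G(Y_{i-1})$ shows every weight configuration consistent with $(X_1,Y_k)$ explains the full masked view equally well, yielding the stronger conclusion $\mathrm{Pr}[W_i=W\mid X_1,\hat{X}_2,\ldots,\hat{X}_k,Y_k]=\mathrm{Pr}[W_i=W\mid X_1,Y_k]$ rather than mere non-reconstructibility. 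What your route buys is a cleaner interface to Theorem~\ref{thm:1} (an equal-posterior statement is exactly what Equation~\eqref{Eq:4} needs, whereas the paper bridges the gap somewhat loosely via Equation~\eqref{Eq:3}) and an honest accounting of the idealization: you correctly flag that a uniform mask on the bounded interval $[-t,t]$ over $\mathbb{R}$ is not a perfect pad (edge-of-support leakage), which the paper sweeps under ``$t$ sufficiently large and kept secret.'' What the paper's counting argument buys is that it needs no translation-invariance or measure-preservation claim at all, so it proves the lemma as literally stated (no exact reconstruction) without confronting the boundary issue; your stronger statement is only exact in the idealized model, as you yourself note, and would require a statistical-distance bound or a move to $\mathbb{Z}_q$ to be made airtight. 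Your underdetermined-system remark in the middle paragraph essentially contains the paper's entire proof as a sub-step, so there is no gap relative to what the paper establishes.
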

\begin{proof}Recall that the goal of the adversary is to find $W_i$, i.e., solve for the $|W_i|$ variables.

 %Hence, $X'_i$ is indistinguishable from any random input in $\F$ with the same shape as $X_i$. Clearly, which contains no information about $W_i$.   Now, 
 Consider the first layer -- the adversary has access to the true input $X_1$ but only gets to see the masked output $X'_2=Y_1+M_2$. In other words, adversary has $2|Y_1|+|W_1|$ unknown variables. Consequently, the adversary cannot solve\footnote{The input/output pair $(X_i,Y_i)$ for any layer is connected to $W_i$ by a system of linear equations. Hence, an adversary needs access to \textit{both} the input and the output to solve for $W_i$} for the weights of $W_1$ from this. Similarly, for the last layer the adversary cannot solve for $W_k$ from $(\hat{X}_k,Y_k)$. %Now, for the other intermediate layers, both the input ($\hat{X}_i, i \in [2,k-1]$) and the output ($\hat{X}_{i+1}=Y_i+M_{i+1}$) is masked  which clearly prevents solving for $W_i$. 
 This concludes our proof for the above lemma. \end{proof}

Note that here we assume the worst case situation for ShadowNet where $X_i=Y_{i-1}$. In practice, the adversary can only observe $\hat{X}_{i}=G(Y_{i-1})+M_{i}$ where $G(\cdot)$ represents the non-linear layers. This adds additional complications for the adversary. For instance, negative values cannot be reversed for ReLU activation layers. 
%\begin{proof} Recall that the goal of the adversary is to find $W_i$, i.e., solve for the $|W_i|$ variables.  We assume that $t$ is sufficiently large and that $t$ is unknown to the adversary. Clearly, it is impossible for the adversary to unmask $\hat{X}_i$ and figure out the exact values of $X_i$. Now, consider the first layer -- the adversary has access to the true input $X_1$ but only gets to see the masked output $\hat{X}_2=Y_1+M_2$. In other words, adversary has $2|Y_i|+|W_i|$ unknown variables. Consequently, the adversary cannot solve\footnote{The input/output pair $(X_i,Y_i)$ for any layer is connected to $W_i$ by a system of linear equations. Hence, an adversary needs access to \textit{both} the input and the output to solve for $W_i$} for the weights of $W_1$ from this. Similarly, for the last layer the adversary cannot solve for $W_k$ from $(\hat{X}_k,Y_k)$. Now, for the other intermediate layers, both the input ($\hat{X}_i, i \in [2,k-1]$) and the output ($\hat{X}_{i+1}=Y_i+M_{i+1}$) is masked  which clearly prevents solving for $W_i$. This concludes our proof for the above lemma. Note that here we assume the worst case situation for ShadowNet where $X_i=Y_{i-1}$. In practice, the adversary can only observe $\hat{X}_{i}=G(Y_{i-1})+M_{i}$ where $G(\cdot)$ represents the non-linear layers. This adds additional complications for the adversary. For instance, negative values cannot be reversed for ReLU activation layers.  \end{proof}

Clearly, from our construction of the feasible set in Equations \eqref{Eq:1} and \eqref{Eq:2}, we have \begin{gather*} \forall i \in [k] , \forall (W',W'') \in \mathcal{F}(\hat{W_i}) \times \mathcal{F}(\hat{W_i})\\\mathrm{Pr}\big[W_i=W'|\hat{W}_i\big]=\mathrm{Pr}\big[W_i=W''|\hat{W}_i\big]\numberthis \label{Eq:3} \end{gather*}
Equation \eqref{Eq:4} follows directly from Lemma \ref{lemma:1} and Equation \eqref{Eq:3}, concluding our proof.

\end{proof}

\vspace{-2mm}
\begin{figure}[ht]
  \centering
  \includegraphics[width=0.7\linewidth]{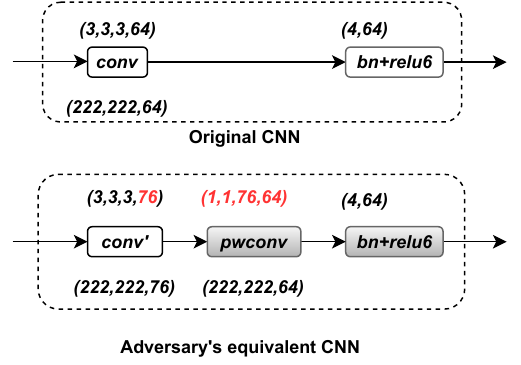}\vspace{-2mm}
    \caption{\textbf{The equivalent CNN architecture needed to be trained by an adversary.} 
    %The upper part shows the original CNN architecture 
    %(only one block of \textit{conv}+\textit{bn+relu6}, omitting the other repeating blocks for simplicity.), the bottom part
    %shows the equivalent CNN architecture that the attacker can build with the exposed weights of \textit{conv'}. 
    The grey color marks the weights of the layers with unknown parameters that the adversary has to train.
    The weights shape is marked on top of the box and the output shape is marked under the box. %Note that \textit{pwconv}  corresponds to the linear transformation performed to restore the weights in ShadowNet which is equivalent to a simple pointwise convolution.
    }
  \label{fig:equivalent} \vspace{-4mm}
\end{figure}
\point{Illustration} In order to study the advantage an adversary might have in ShadownNet, over a black-box model stealing attack, we assume that the adversary reuses the transformed weights to build
an equivalent CNN. 
Consider the example CNN from Figure~\ref{fig:example} -- the minimum equivalent CNN that 
reuses the transformed weights is shown in Figure~\ref{fig:equivalent}. 
As mentioned before in Section~\ref{sec:scheme_opt},
the linear transformation layer is essentially a pointwise convolutional layer and we use \textit{pwconv} to represent it inside the TEE. Note that the mask/unmask layers are not needed to construct the equivalent CNN.

We use the number of learnable parameters to assess the difficulty of training a CNN.
In our example, the block in the original CNN has $3\times3\times3\times64 + 4\times64 = 1,984$ parameters
while the block in the adversary's equivalent 
CNN has $76\times64 + 4\times64 = 5,120$ parameters to be trained. Here, 
$4\times64$ learnable parameters are due to the batch normalization (\textit{bn}) layer. There are $76$ kernels in the \textit{conv'} layer and this number can be configured via the obfuscation ratio, which is set to $r=1.2$ in our example ($76 = 64\times1.2$).
In fact, even for $r=1$, the minimum allowed value, the adversary's CNN has more learnable parameters
($64\times 64 + 4\times 64 = 4,480$).

\end{appendices}

\end{document}